\let\oldvec\vec
\let\vec\oldvec
\newtheorem{theorem}{Theorem}[section]
\newtheorem{lemma}[theorem]{Lemma}
\newtheorem{corollary}[theorem]{Corollary}
\newtheorem{definition}[theorem]{Definition}
\newcommand*{\UB}{$\alpha_\delta^u$\xspace}
\newcommand*{\LB}{$\alpha_\delta^l$\xspace}
\newcommand*{{\BudgetGames}}{\emph{bandwidth allocation games}}
\newcommand*{\BudgetGame}{\emph{bandwidth allocation game}}
\newcommand*{{\BUDGETGAMES}}{\emph{Bandwidth Allocation Games}}
\title{On Existence and Properties of Approximate Pure Nash Equilibria\\in Bandwidth Allocation Games\thanks{This work was partially supported by the German Research Foundation (DFG) within the Collaborative Research Centre ``On-The-Fly Computing'' (SFB 901) and by the EU within FET project MULTIPLEX under contract no.\ 317532.}}
\author{Maximilian Drees}
\author{Matthias Feldotto}
\author{S\"oren Riechers}
\author{Alexander Skopalik}
\affil{Heinz Nixdorf Institute \& Department of Computer Science\authorcr University of Paderborn, Germany}
\begin{document}
\maketitle
\begin{abstract}
 In \emph{bandwidth allocation games} (BAGs), the strategy of a player consists of various demands on different resources.
 The player's utility is at most the sum of these demands, provided they are fully satisfied. 
 Every resource has a limited capacity and if it is exceeded by the total demand, it has to be split between the players.
 Since these games generally do not have pure Nash equilibria, we consider
 approximate pure Nash equilibria, in which no player can improve her utility by more than some fixed factor $\alpha$ through unilateral strategy changes.
 There is a threshold $\alpha_\delta$ (where $\delta$ is a parameter that limits the demand of each player on a specific resource) such that $\alpha$-approximate pure Nash equilibria always exist for $\alpha \geq \alpha_\delta$, but not for $\alpha < \alpha_\delta$.
 We give both upper and lower bounds on this threshold $\alpha_\delta$ and show that the corresponding decision problem is ${\sf NP}$-hard.
 We also show that the $\alpha$-approximate price of anarchy for BAGs is $\alpha+1$. 
 For a restricted version of the game, where demands of players only differ slightly from each other (e.g. symmetric games),
 we show that approximate Nash equilibria can be reached (and thus also be computed) in polynomial time using the best-response dynamic.
 Finally, we show that a broader class of utility-maximization games (which includes BAGs)
 converges quickly towards states whose social welfare is close to the optimum.
\end{abstract}

\section{Introduction}
Nowadays, as cloud computing and other data intensive applications such as video streaming gain more and more importance, the amount of data processed in networks and compute centers is growing.
Moore's law for data traffic~\cite{mooresLawForDataTraffic02} states that the overall data traffic doubles each year.
This yields unique challenges for resource management, particularly bandwidth allocation.
As technology cannot follow up with the data increase, bandwidth constraints are often a bottleneck of current systems.

In our paper, we cope with the problem that service providers often cannot satisfy the needs of all customers. That is, the overall size of connections between the provider and all customers exceeds the amount of data that the provider can process.
By allowing different link sizes in network structures, connections between providers and customers with different capacities can be modeled. 
In case a provider cannot fulfill the requirements of all customers, the available bandwidth needs to be split.
This results in customers not being supplied with their full capacity.
In video streaming, for example, this may lead to a lower quality stream for certain customers.
In our setting, we assume that each customer can choose the service providers she wants to use herself.
While this aspect has recently been studied from the compute center's point of view~\cite{Brinkmann:2014},
our work considers limited resources from the customers' point of view.

We study this scenario in a game theoretic setting called {\BudgetGames}.
Here, we are interested in the effects of rational decision making by individuals.
In our context, the customers act as the players.
In contrast, we view the service providers as resources with a limited capacity.
Each possible distribution of a player among the resources (which we view as network entrance points) is regarded as one of her strategies.
Now, each player strives to maximize the overall amount of bandwidth that is supplied to her.
Our main interest lies in states in which no customer wants to deviate from her current strategy, as this would yield no or only a marginal benefit under the given situation.
These states are called (approximate) pure Nash equilibria.
Instead of a global instance enforcing such stable states, they occur as the result of player-induced dynamics.
At every point in time, exactly one player changes her strategy such that the
amount of received bandwidth is maximized, assuming the strategies of the other players are fixed.
We show that if we allow only changes which increase the received bandwidth by some constant factor,
this indeed leads to stable states.
We further analyze the quality of such states in regard to the total bandwidth received by all players
and compare it to the state which maximizes this global payoff.
\paragraph{Related Work.}
Bandwidth allocation games can be considered to be a generalization of market sharing games~\cite{goemans2006market}, in which players choose a set of market in which they offer a service. Each market has a fixed cost and each player a budget. The set of markets a player can service is thus determined by a knapsack constraint.
The utility of a player is the sum of utilities that she receives from each market that she services. Each market has a fixed total profit or utility that is evenly distributed among the players that service the market.

The utility functions of bandwidth allocation games are more general. In particular the influence of a player on the utility share others players receive is not uniform. Players with high demand have a much stronger influence on the bandwidth other players receive than player with small demands. This feature can also be found in demanded congestion games~\cite{Milchtaich1996111}.
Players in a congestion game choose among subsets of resources while trying to minimize costs. The cost of a player is sum of the costs of the resources.
In the undemanded version which was introduced by Rosenthal~\cite{Rosenthal73} the cost of each resource depends only on the number of players using that resource. In the demanded version each player has a demand and the cost of a resource is a function of the sum of demands of the players using the resource. 
In both model the cost caused by a resource is identical for each player that uses the resource. 
In the variant of player-specific congestion games, each player has her own set of cost functions~\cite{Milchtaich1996111} for each resource that map from the number of players using a resource to the cost incurred to that specific player
Mavronicolas et al. \cite{DBLP:conf/mfcs/MavronicolasMMT07} combined these two variations into
demanded congestion games with player-specific constants,
in which the cost functions are based on abelian group operations.
Harks and Klimm~\cite{DBLP:conf/tark/HarksK11} introduced a model in which each player not only picks a subset of resources, but also her single demand on them.
A higher demand equals a higher utility for each player, but also increases the congestion at the chosen resources. The final payoff results from the difference between utility and congestion.

Both, market sharing games and congestion games always posses pure Nash eqilibria. Moreover they are potential games~\cite{monderer1996potential} which implies that every finite sequence of best response dynamics is guaranteed to converge to a pure Nash equilibrium. 
demanded congestion games are potential games only if the cost function are linear or exponential functions~\cite{harks2010existence}. 
For demanded and player-specific games the existence of pure Nash equilibria
this is guaranteed for the special case in which the strategy spaces of the players for the bases of a matroid~\cite{DBLP:journals/tcs/AckermannRV09}. 

Fabrikant et al. \cite{DBLP:conf/stoc/FabrikantPT04} showed that the problem of computing a pure Nash equilibrium is {\sf PLS}-complete. This result implies that the improvement path could be exponentially long. In the case of demanded~\cite{dunkel2008complexity} or player-specific~\cite{ackermann2008complexity} congestion games it is {\sf NP}-hard to decide if there exists a pure Nash equilibrium. 
These negative computational and existence results lead to the study of $\alpha$-approximate pure Nash equilibria which are states in which no player can increases her utility (or decrease her cost) by a {\em factor} of more than $\alpha$.
Chien and Sinclair~\cite{DBLP:journals/geb/ChienS11} showed that in 
symmetric undemanded congestions games and under a mild assumption on the cost functions every sequence of $(1+\varepsilon)$-improving steps 
convergence to $(1+\varepsilon)$-approximate equilibria in polynomial time in the number of players and $\varepsilon^{-1}$.  This result cannot be generalized to asymmetric games as Skopalik and V\"ocking~\cite{DBLP:conf/stoc/SkopalikV08} showed that the problem is still {\sf PLS}-complete.
However, for the case of linear or polynomial cost function Caragiannis et al. presented~\cite{DBLP:conf/focs/CaragiannisFGS11} an algorithm to compute approximate pure Nash equilibria in polynomial time which was slightly improved in  \cite{Skopalik14}.

For demanded congestion games it was shown that $\alpha$-approximate pure equilibria with small values of $\alpha$ exist~\cite{DBLP:conf/approx/HansknechtKS14} and that they can be computed in polynomial time~\cite{caragiannis2012approximate} albeit only for a larger values of $\alpha$. Chen and Roughgarden \cite{DBLP:journals/mst/ChenR09} proved the existence of approximate equilibria in network design games with demanded players.
The results have been used by Christodoulou et al. \cite{DBLP:journals/algorithmica/ChristodoulouKS11} to give tight bounds on the price of anarchy and price of stability of approximate pure Nash equilibria in undemanded congestion games.

To quantify the inefficiency of equilibrium outcomes the price of anarchy has been thoroughly analyzed for exact equilibria
for undemanded \cite{DBLP:journals/siamcomp/AlandDGMS11,DBLP:conf/stoc/ChristodoulouK05,DBLP:conf/stoc/Roughgarden09} as well as for demanded congestion games \cite{DBLP:journals/siamcomp/AlandDGMS11,DBLP:journals/siamcomp/AwerbuchAE13,DBLP:journals/teco/BhawalkarGR14,DBLP:conf/stoc/ChristodoulouK05}.
Christodoulou et al. \cite{DBLP:journals/algorithmica/ChristodoulouKS11} also investigated the PoA for approximate pure Nash equilibria.

Recent work bounded the convergence time to states with a social welfare close to the optimum rather than equilibria.
The concept of smoothness was first introduced by Roughgarden~\cite{DBLP:conf/stoc/Roughgarden09}.
Several variants such as the concept of semi-smoothness~\cite{DBLP:conf/sigecom/LucierL11} followed.
Awerbuch et al. \cite{DBLP:conf/sigecom/AwerbuchAEMS08} proposed $\beta$-niceness which was reworked in~\cite{DBLP:conf/ijcai/AugustineCEFGS11}.
It is the basis of the concept of nice games introduced in~\cite{Anshelevich13}, which we use in our work.
\paragraph{Our Contribution.}
We introduce the notion of \emph{$\delta$-share bandwidth allocation games} (BAGs).
The demand on a resource may not exceed that resource's capacity by a factor of more than $\delta$.
Building on a result from our previous paper \cite{Drees14}, we show that no matter how small we choose $\delta$,
these games generally do not have pure Nash equilibria.
We then turn to $\alpha$-approximate pure Nash equilibria, in which no player can improve her utility by a
factor of more than $\alpha$ through unilateral strategy changes.
We are interested in the threshold $\alpha_{\delta}$ (based on a given $\delta$),
such that for all $\alpha < \alpha_{\delta}$, there is a $\delta$-share BAG without an $\alpha$-approximate pure Nash equilibrium,
and for all $\alpha \geq \alpha_{\delta}$, every $\delta$-share BAG has an $\alpha$-approximate pure Nash equilibrium.
By using a potential function argument, we give both upper and lower bounds for $\alpha_\delta$.
For a general $\delta$-share BAG $\mathcal{B}$ and $\alpha < \alpha_\delta^l$, it is 
{\sf NP}-complete to decide if $\mathcal{B}$ has an $\alpha$-approximate pure Nash equilibria
and {\sf NP}-hard to compute it, if available.
On the other hand, for $\alpha \geq \alpha_\delta^u$ and if the difference between the most-profitable strategies of the players
can be bounded by some constant $\lambda$, then an $(\alpha+\varepsilon)$-approximate Nash equilibrium can be computed efficiently.
We give an almost tight bound of $\alpha+1$ for the $\alpha$-approximate price of anarchy for BAGs and
finally show that utility-maximization games with certain properties converge
quickly towards states with a social welfare close to the optimum.
We then adapt this general result to $\delta$-share BAGs.
\section{Model and Preliminaries}
\label{sec:model}
A \emph{bandwidth allocation game} (BAG) $\mathcal{B}$ is a tuple
$\left(\mathcal{N}, \mathcal{R}, (b_r)_{r \in \mathcal{R}}, (\mathcal{S}_i)_{i \in \mathcal{N}}\right)$
where the set of players is denoted by $\mathcal{N} = \{1,\ldots,n\}$,
the set of resources by $\mathcal{R} = \{r_1,\ldots,r_m\}$, 
the \emph{capacity} of resource $r$ by $b_r$
and the strategy space of player $i$ by $\mathcal{S}_i$.
Each $s_i \in \mathcal{S}_i$ has the form $(s_i(r_1),\ldots,s_i(r_m)) \in \mathbb{R}_{\geq 0}^m$,
with $s_i(r_j) \in \mathbb{R}_{\geq 0}$ being the \emph{demand} of $s_i$ on the resource $r_j$.
We say that a strategy $s_i$ \emph{uses} a resource $r_j$ if $s_i(r_j) > 0$.
$\mathcal{S} = \mathcal{S}_1 \times \ldots \times \mathcal{S}_n$ is the set of strategy profiles
and $u_i: \mathcal{S} \rightarrow \mathbb{R}_{\geq 0}$ denotes the private utility function player $i$ strives to maximize.
For a strategy profile ${\sf s} = (s_1,\ldots,s_n)$,
let $u_{i,r}({\sf s}) \in \mathbb{R}_{\geq 0}$ denote the utility of player $i$ from resource $r$, which is defined as
$$u_{i,r}({\sf s}) := \min\left(s_i(r), \frac{b_r \cdot s_i(r)}{\sum_{j \in \mathcal{N}} s_j(r)} \right).$$
The total utility of $i$ is then defined as $u_i({\sf s}) := \sum_{r \in \mathcal{R}} u_{i,r}({\sf s})$.

Let $\delta > 0$.
We call a bandwidth allocation game a \emph{$\delta$-share bandwidth allocation game}
if for every strategy $s_i$ and every resource $r$, the restriction $s_i(r) \leq \delta b_r$ holds.

Let ${\sf s}$ be an arbitrary strategy profile and $i \in \mathcal{N}$.
We denote with ${\sf s}_{-i} := (s_1,\ldots,s_{i-1},s_{i+1},\ldots,s_n)$ the strategy vector of all players except $i$.
For any $s_i \in \mathcal{S}_i$, we can extend this to the strategy profile $({\sf s}_{-i},s_i) := (s_1,\ldots,s_{i-1},s_i,s_{i+1},\ldots,s_n)$.
We denote with $s^b_i \in \mathcal{S}_i$ the \emph{best response} of $i$ to ${\sf s}_{-i}$
if $u_i({\sf s}_{-i},s^b_i) \geq u_i({\sf s}_{-i},s_i)$ for all $s_i \in \mathcal{S}_i$.

Let $\alpha \geq 1$ and $s_i$ a strategy of player $i$.
If there is a strategy $s_i' \in \mathcal{S}_i$ with 
$\alpha \cdot u_i({\sf s}_{-i},s_i) < u_i({\sf s}_{-i},s_i')$,
then we call the switch from $s_i$ to $s_i'$ an \emph{$\alpha$-move}.
For $\alpha = 1$, we simply use the term \emph{move}.
A strategy profile ${\sf s}$ is called an $\alpha$-approximate pure Nash equilibrium ($\alpha$-NE)
if $\alpha \cdot u_i({\sf s}) \geq u_i({\sf s}_{-i},s_i')$ for every $i \in \mathcal{N}$ and $s_i' \in \mathcal{S}_i$.
For $\alpha = 1$, ${\sf s}$ is simply called a pure Nash equilibrium (NE).
If a bandwidth allocation game eventually reaches an ($\alpha$-approximate) pure Nash equilibrium after a finite number of ($\alpha$-)moves from any initial strategy
profile ${\sf s}$, we say that the game has the \emph{finite improvement property}.

The \emph{social welfare} of a strategy profile ${\sf s}$ is defined as $u({\sf s}) = \sum_{i \in \mathcal{N}} u_i({\sf s})$.
Let $opt$ be the strategy profile with $u(opt) \geq u({\sf s})$ for all ${\sf s} \in \mathcal{S}$.
If $\mathcal{S}^\alpha \subseteq \mathcal{S}$ is the set of all $\alpha$-approximate pure Nash equilibria in a bandwidth allocation game $\mathcal{B}$,
then $\mathcal{B}$'s \emph{$\alpha$-approximate price of anarchy} ($\alpha$-PoA) is the ratio $\max_{{\sf s} \in \mathcal{S}^\alpha} \frac{u(opt)}{u({\sf s})}$.
Again, we simply use the term \emph{price of anarchy} (PoA) for $\alpha = 1$.

Throughout the paper, we are going to use a potential function $\phi: \mathcal{S} \rightarrow \mathbb{R}$
to analyze the properties of bandwidth allocation games.
Let $T_r({\sf s}) := \sum_{i \in \mathcal{N}} s_i(r)$ be the total demand on resource $r$ under strategy profile ${\sf s}$.
We define $\phi({\sf s}) := \sum_{r \in \mathcal{R}} \phi_r({\sf s})$ with 
$$\phi_r({\sf s}) := \left\{\begin{array}{cl} T_r({\sf s}) & \mbox{ \ \ \ if } T_r({\sf s}) \leq b_r \\ b_r + \int_{b_r}^{T_r({\sf s})} \! \frac{b_r}{x} \, \mathrm{d}x  & \mbox{ \ \ \ else} \end{array}\right.$$
%
%
%
%
\section{Pure Nash Equilibria}
The $\delta$-share BAGs in this paper resemble the standard budget games from our previous work \cite{Drees14} in which $\delta$ was unbounded.
This allowed arbitrarily large demands for the strategies.
In particular, the demand of a strategy on a resource $r$ could exceed the capacity $b_r$. 
In $\delta$-share BAGs, that demand is restricted to the interval $[0,\delta b_r]$ for a fixed $\delta > 0$.
We now show that our previous result concerning
the existence of NE still holds for any restriction on the demands.
\begin{definition}
 \label{def:basicBG}
 Let $\delta > 0$ be arbitrary, but fixed.
 Choose $\gamma,\sigma > 0$ and $n \in \mathbb{N}_0$ s.t. $\gamma < \delta$, $\sigma \leq \delta$ and $n \cdot \sigma + \delta = 1$.
 Let $\mathcal{B}_0$ be a $\delta$-share bandwidth allocation game with
 $|\mathcal{N}_0| = n+2$,
 $\mathcal{R}_0 = \{r_1,r_2,r_3,r_4\}$ resources with capacity $1$ and
 the strategy spaces $\mathcal{S}_1 = \{s_1^1 = (\gamma,\delta,0,0), s_1^2 = (0,0,\delta,\gamma)\}$,
 $\mathcal{S}_2 = \{s_2^1 = (\delta,0,\gamma,0), s_2^2 = (0,\gamma,0,\delta)\}$ and
 $\mathcal{S}_i = \{s_i = (\sigma,\sigma,\sigma,\sigma) \}$ for $i \in \{3,\ldots,n+2\}$.
\end{definition}
The players $i \in \{3,\ldots,n+2\}$ serve as auxiliary players to reduce the available capacity of the resources.
Each can only play strategy $s_i$,
so we focus on the two remaining players 1 and 2,
which we regard as the main players of the game.
In every strategy profile, one of them has a utility of
$u := \frac{\gamma}{\delta + \gamma + n \cdot \sigma} + \delta$
while the other one has a utility of $u' := \frac{\delta}{\delta + \gamma + n \cdot \sigma} + \gamma$.
Assume $\delta \leq 1$.
Since $\delta + \gamma + n \cdot \sigma > 1$, we obtain $\frac{\delta - \gamma}{\delta + \gamma + n \cdot \sigma} < \delta - \gamma$
and therefore $u' < u$.
Since the player with utility $u'$ can always change strategy to swap the two utilities,
$\mathcal{B}_0$ does not have a pure Nash equilibrium.

For $\delta > 1$, we choose $n = 0$ and $\gamma > 1$.
In this case, $u = \frac{\gamma}{\delta + \gamma} + 1$ and $u' = \frac{\delta}{\delta + \gamma} + 1$
with $u < u'$.
Again, the player with the lower utility $u$ can always improve her utility.
The game $\mathcal{B}_0$ for $\delta > 1$ is shown in Figure~\ref{fig:figure}.
We conclude the following result.
\begin{corollary}
 \label{theo:noGeneralNE}
 For every $\delta > 0$, there is a $\delta$-share bandwidth allocation game which does not yield a pure Nash equilibrium.   
\end{corollary}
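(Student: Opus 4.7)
The plan is to establish the corollary directly by analyzing the game $\mathcal{B}_0$ of Definition~\ref{def:basicBG} and showing it admits no pure Nash equilibrium. Since players $3,\ldots,n+2$ each have a singleton strategy space, they cannot deviate, so equilibrium analysis reduces to the two main players. As each of them has only two strategies, just four joint profiles need to be checked.

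For each such profile I would compute the two main players' utilities, using that the auxiliary players together contribute $n\sigma$ to every resource. The strategies are constructed so that in every profile the two main players split the four resources in a mirrored way: one of them places her ``large'' demand $\delta$ on a resource shared with an overflowing amount of $\gamma$ from the opponent, while also placing her ``small'' demand $\gamma$ on a resource where the opponent places $\delta$, and symmetrically for the other player. A direct application of the utility formula $u_{i,r}$ then shows that in each of the four profiles, one main player obtains
\[
u = \frac{\gamma}{\delta + \gamma + n\sigma} + \delta,
\]
while the other obtains $u' = \frac{\delta}{\delta + \gamma + n\sigma} + \gamma$, and that switching a player's strategy to the other element of her strategy space swaps these two roles.

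The remaining ingredient is the strict inequality $u \neq u'$. For $\delta \leq 1$ I would use the defining relation $n\sigma + \delta = 1$ to simplify $u - u' = (\delta - \gamma)\gamma/(1+\gamma)$, which is strictly positive because $0 < \gamma < \delta$. For $\delta > 1$, the alternative choice $n = 0$ and $1 < \gamma < \delta$ gives $u - u' = (\gamma - \delta)/(\gamma + \delta) < 0$. Either way, in every profile the main player currently receiving the smaller utility can strictly improve by switching to her other strategy, so no profile can be a pure Nash equilibrium, and the corollary follows.

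The main obstacle is the mildly tedious bookkeeping needed to verify the symmetric $\{u,u'\}$ pattern across all four profiles, in particular tracking which branch of the $\min$ in $u_{i,r}$ is active on each resource. The constraint $n\sigma + \delta = 1$ is precisely what makes every resource used by both main players overloaded by exactly $\gamma$ (in the $\delta \le 1$ regime), while setting $n=0$ with $\gamma > 1$ makes every singly-used resource still overloaded in the $\delta > 1$ regime; once these observations are in place, the closed-form expressions for $u$ and $u'$ drop out immediately.
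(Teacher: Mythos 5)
Your proposal is correct and follows the paper's own argument essentially verbatim: analyze $\mathcal{B}_0$, observe that every one of the four relevant profiles assigns the two main players the utilities $u$ and $u'$ with the roles swapped by a unilateral deviation, and verify $u \neq u'$ separately in the $\delta \le 1$ and $\delta > 1$ regimes (your simplifications $u - u' = (\delta-\gamma)\gamma/(1+\gamma)$ and $u - u' = (\gamma-\delta)/(\gamma+\delta)$ are exactly what the paper's inequalities amount to). One sentence of your structural description misstates the sharing pattern---in each profile only \emph{one} resource is used by both main players (one contributing $\gamma$, the other $\delta$), while each player's remaining demand sits on a resource she occupies alone with the auxiliary players---but your closing paragraph and the formulas themselves show you have the correct picture, so this is a wording slip rather than a gap.
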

\begin{figure}[h]
 \centering
 \includegraphics[scale=0.7]{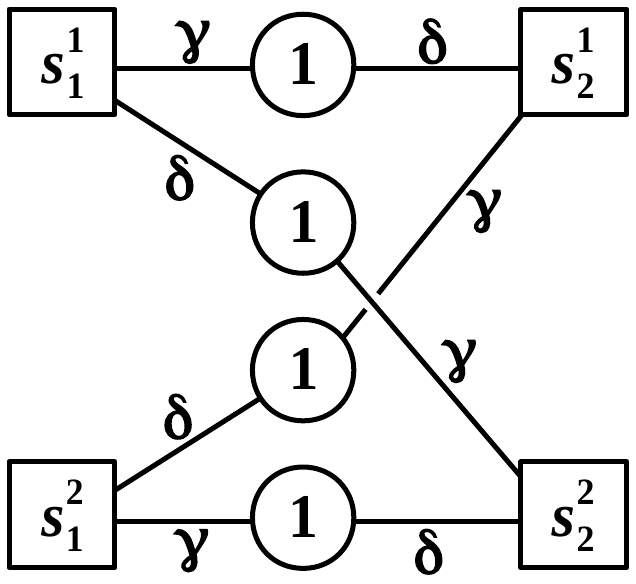}
 \caption{
	   A $\delta$-share bandwidth allocation game without a pure Nash equilibrium for $\delta > 1$.
	   Strategies are depicted as rectangles,
	   resources as circles with their capacity inside. 
	   A link between strategy $s$ and resource $r$ shows that $r$ is used by $s$.
	   The demand of $s$ on $r$ is written next to the link.	   
	 }
 \label{fig:figure}
\end{figure}
%
%
%
%
\section{Approximate Pure Nash Equilibria}
The previous section has shown that we cannot expect any $\delta$-share BAG to have a pure Nash equilibrium.
Therefore, we turn our attention to $\alpha$-approximate pure Nash equilibria.
If $\alpha$ is chosen large enough, any strategy profile becomes an $\alpha$-NE, whereas we know that there may not be an $\alpha$-NE for $\alpha = 1$.
Hence, there has to be a threshold $\alpha_\delta$ for a guaranteed existence of these equilibria in dependency of $\delta$.
In this section, we give both upper and lower bounds on $\alpha_\delta$. 
We start with the upper bound \UB, which we define as follows. 
\begin{definition}
 Let $\delta > 0$. We define the upper bound \UB on $\alpha_\delta$ as 
 $$\text{\UB} := w \cdot \frac{\ln(w)-w + \delta + 1}{\delta} \text{ with  } w = \left(-\frac{1}{2}W_{-1}\left(-2e^{(-\delta) - 2}\right)\right).$$
 \label{def:upperBound}
\end{definition}
Here, $W_{-1}$ is the lower branch of the Lambert W function.
Table~\ref{tab:upperLowerBound} shows a selection of values of \UB.
\begin{theorem}
 \label{theo:lowerBoundExistance}
 Let $\delta > 0$ and $\mathcal{B}$ be a $\delta$-share bandwidth allocation game. 
 For $\alpha \geq$ \UB, $\mathcal{B}$ reaches an $\alpha$-approximate pure Nash equilibrium after a finite number of $\alpha$-moves. 
\end{theorem}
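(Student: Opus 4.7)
The plan is to use $\phi$ from Section~\ref{sec:model} as a generalized ordinal potential for $\alpha$-moves. Concretely, I aim to establish the per-deviation inequality
\begin{equation*}
\phi({\sf s}_{-i}, s_i') - \phi({\sf s}) \;\geq\; u_i({\sf s}_{-i}, s_i') - \alpha_\delta^u \cdot u_i({\sf s})
\end{equation*}
for every player $i$ and every change $s_i \to s_i'$. Granted this, any $\alpha$-move with $\alpha \geq \alpha_\delta^u$ satisfies $u_i({\sf s}_{-i},s_i') > \alpha\, u_i({\sf s}) \geq \alpha_\delta^u\, u_i({\sf s})$, so the right hand side is strictly positive and $\phi$ strictly increases. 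Since $\phi$ is bounded above (each $\phi_r$ grows only logarithmically beyond $b_r$ and the aggregate demand at each resource is at most $n\delta b_r$) and, on finite strategy spaces, takes only finitely many values, every sequence of $\alpha$-moves must terminate in an $\alpha$-NE, which is exactly the finite improvement property claimed.

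The main reduction is to use that both $\phi$ and $u_i$ decompose additively over resources. It then suffices to prove, for every single resource $r$ and every local change of demand $d \to d'$ (so that $T := T_r({\sf s})$ changes to $T' = T + d' - d$),
\begin{equation*}
\phi_r(T') - \phi_r(T) \;\geq\; u_{i,r}(d',T') - \alpha_\delta^u \cdot u_{i,r}(d,T),
\end{equation*}
where $u_{i,r}(d,T) = \min\bigl(d,\, b_r d/T\bigr)$. Summing this inequality over all resources yields the global one, because the deviation affects each resource independently through $s_i(r)$.

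Four subcases arise depending on whether each of $T, T'$ is below or above $b_r$. In the regime $T, T' \leq b_r$ both sides collapse to $d' - d$, and any $\alpha_\delta^u \geq 1$ works. The decisive regime is $T, T' > b_r$, where $\phi_r(T') - \phi_r(T) = b_r \ln(T'/T)$ and the utilities are $b_r d/T$ and $b_r d'/T'$. The mixed regimes $T \leq b_r < T'$ and $T > b_r \geq T'$ are handled by splitting the integral defining $\phi_r$ at $x = b_r$, so they reduce to convex combinations of the two pure regimes. Normalizing $b_r = 1$ and writing the demand and total as ratios, the worst case of the inequality becomes an optimization in two variables, with the demand ratio constrained to $[0,\delta]$ and the total ratio constrained by $T \geq d$.

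The technical heart -- and the step I expect to be the main obstacle -- is the calculus argument pinpointing the extremal configuration in the logarithmic regime. The first-order optimality conditions lead to a transcendental equation that can be rewritten as $w\, e^{-2w} = e^{-\delta - 2}$, whose solution on the branch $w > 1$ is precisely $w = -\tfrac{1}{2} W_{-1}(-2 e^{-\delta - 2})$; back-substitution returns exactly the expression $\alpha_\delta^u = w\,(\ln w - w + \delta + 1)/\delta$ from Definition~\ref{def:upperBound}. After this extremum is identified, it remains to check that the mixed regimes do not tighten the constant further, which follows by a direct monotonicity comparison between the linear and logarithmic pieces of $\phi_r$ at $x = b_r$ and the corresponding pieces of $u_{i,r}$.
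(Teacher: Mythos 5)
Your proposal follows essentially the same route as the paper: $\phi$ serves as a generalized ordinal potential for $\alpha$-moves, the argument reduces to a per-resource bound on the marginal potential-to-utility ratio of the deviating player, and the worst case is located at demand $\delta b_r$ with the residual total determined by the same transcendental equation $w\,e^{-2w}=e^{-\delta-2}$ solved via $W_{-1}$, recovering exactly $\alpha_\delta^u$. The only cosmetic difference is that you case on the totals $T,T'$ before and after the move rather than on $t_{-i}$ as the paper does; both reduce to the same two-variable optimization and the same constant.
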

\begin{proof}
 For this proof, we use the potential function $\phi$ introduced in Section~\ref{sec:model}.
 We also need some additional concepts.
 For a resource $r$, let $\phi_r({\sf s}_{-i})$ be the potential of $r$ omitting the demand of player $i$.
 Now, $\phi_{i,r}({\sf s}) := \phi_r({\sf s}) - \phi_r({\sf s}_{-i})$ is the part of $r$'s potential due to strategy $s_i$ if $s_i$ is the last strategy to be considered when evaluating $\phi_r$ (cf. Figure~\ref{fig:approxEqui}).
 Note that we always have $u_{i,r}({\sf s}) \leq \phi_{i,r}({\sf s})$. 
 We are going to show that any strategy change of a player $i$ improving her personal utility by a factor of more than $\alpha$ also results in an increase of $\phi$ if $\alpha$ is chosen accordingly.
 This implies that the game does not possess any cycles and thus always reaches an $\alpha$-NE after finitely many steps (finite improvement property), as the total number of strategy profiles is finite.  
  
 For now, let $\alpha \geq \max_{i,r} \left(\frac{\phi_{i,r}({\sf s})}{u_{i,r}({\sf s})}\right)$ which trivially implies $\phi_{i,r}({\sf s}) \leq \alpha \cdot u_{i,r}({\sf s})\ \forall\ i,r$. 
 Assume that under the strategy profile ${\sf s}$, player $i$ changes her strategy from $s_i$ to $s'_i$, increasing her overall utility by a factor of more than $\alpha$ in the process.
 We denote the resulting strategy profile by ${\sf s}'$. It follows
 \begin{align*}
  \Delta \phi &= \phi({\sf s}') - \phi({\sf s}) = \sum_{r \in \mathcal{R}} \phi_{i,r}({\sf s}') - \sum_{r \in \mathcal{R}} \phi_{i,r}({\sf s}) \geq \sum_{r \in \mathcal{R}} u_{i,r}({\sf s}') - \alpha \cdot \sum_{r \in \mathcal{R}} u_{i,r}({\sf s})   \\
              &= u_i({\sf s}') - \alpha \cdot u_i({\sf s}) > \alpha \cdot u_i({\sf s}) - \alpha \cdot u_i({\sf s}) = 0
 \end{align*}
 Therefore, the potential $\phi$ of $\mathcal{B}$ indeed grows with every $\alpha$-move.
 \begin{figure}[h]
  \centering
  \includegraphics[scale=0.35]{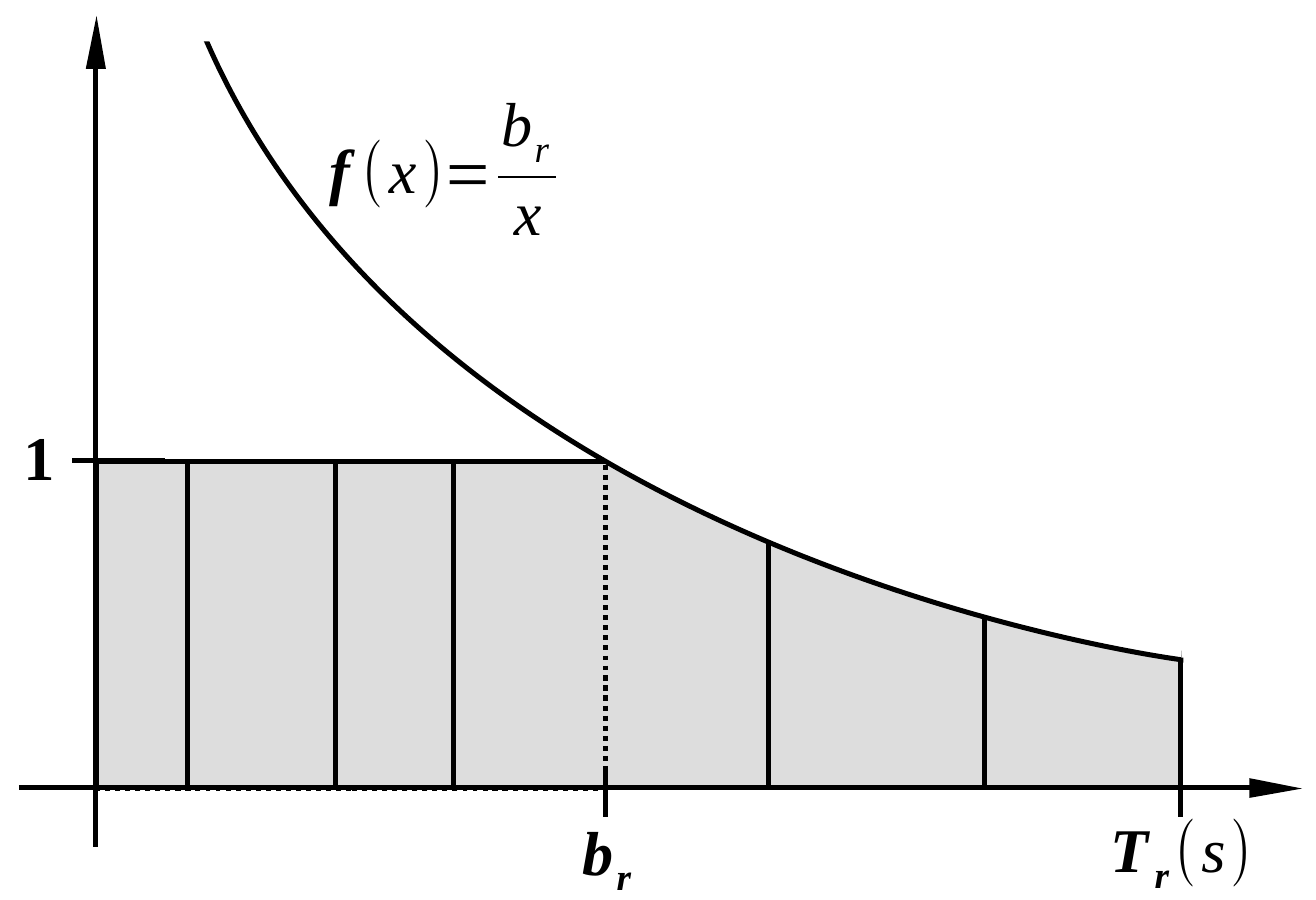}
  \includegraphics[scale=0.35]{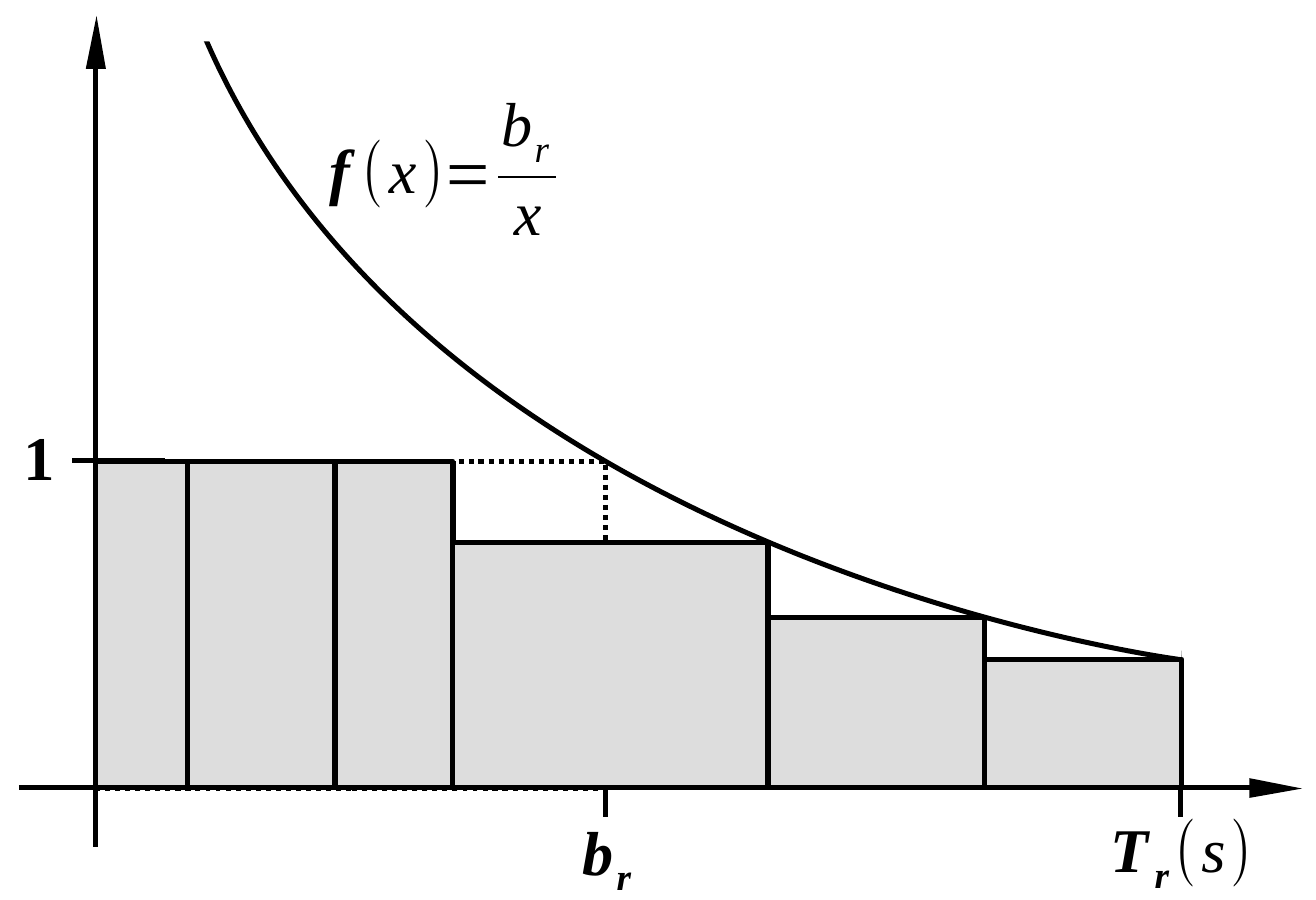}
  \caption{
            The left side shows the potential of resource $r$, divided over the players.
            Each block represents a player currently using $r$.
            The order of the players does not affect the potential as a whole, but the amount caused by each individual player.
            The right side shows how much utility each player initially receives if they arrive at $r$ according to their order.
            Therefore, the utility of the last player in the graph is her actual utility from $r$ under the strategy profile ${\sf s}$.
          }          
  \label{fig:approxEqui}  
 \end{figure} 
 
 
 It remains to be shown that \UB$\geq\max_{i,r} \left(\frac{\phi_{i,r}({\sf s})}{u_{i,r}({\sf s})}\right)$. 
For a resource $r$, define $T_{-i,r}({\sf s}) := T_{r}({\sf s}) - s_i(r)$ as the total demand on $r$ excluding player $i$.
 When the situation is clear from the context, we also write $t_{-i}$ instead of $T_{-i,r}({\sf s})$.
 We make a case distinction based on the size of $t_{-i}$ and look at the two cases $t_{-i} < b_r$ and $t_{-i} \geq b_r$.
 We start with the first one. 
 Note that we can assume $t_{-i} + s_i > b_r$, because otherwise the ratio between potential and utility of $i$ at $r$ would be 1.
 The ratio looks as follows:
 \begin{align*}
  \frac{\phi_{i,r}({\sf s})}{u_{i,r}({\sf s})} &= \frac{b_r-t_{-i} + \int_{b_r}^{t_{-i}+s_i} \frac{b_r}{x} \, \mathrm{d}x}{\frac{b_r \cdot s_i}{t_{-i}+s_i}} \\
                                   &= (t_{-i}+s_i) \cdot \frac{b_r-t_{-i} + b_r \cdot \ln(\frac{t_{-i}+s_i}{b_r})}{b_r \cdot s_i} 
 \end{align*} 
 First we show that this ratio does not decrease as $s_i$ grows larger.
 Its derivative by $s_i$ is
 $$\frac{\partial}{\partial \, s_i} \, \frac{\phi_{i,r}({\sf s})}{u_{i,r}({\sf s})} = \frac{b_r(s_i - t_{-i}) - b_r \cdot t_{-i} \cdot \ln\left(\frac{t_{-i} + s_i}{b_r}\right) + t_{-i}^2}{b \cdot s_i^2}$$  
 The numerator can be bounded below by 
 \begin{align*}
       & \ b_r(s_i - t_{-i}) - b_r \cdot t_{-i} \cdot \ln\left(\frac{t_{-i} + s_i}{b_r}\right) + t_{-i}^2 \\
     = & \ b_r(s_i - t_{-i}) - b_r \cdot t_{-i} \cdot \ln\left(1+ \frac{t_{-i} + s_i - b_r}{b_r}\right) + t_{-i}^2 \\
  \geq & \ b_r(s_i - t_{-i}) - b_r \cdot t_{-i} \cdot \frac{t_{-i} + s_i - b_r}{b_r} + t_{-i}^2 \\
     = & \ b_r \cdot s_i - s_i \cdot t_{-i} = s_i (b_r - t_{-i}) \geq 0
 \end{align*}
 and therefore, the original ratio becomes only worse for bigger values of $s_i$. 
 So from now on, we substitute $s_i$ by its upper bound $\delta b_r$.
 Next we determine the worst-case value for $t_{-i}$.
 The derivative by $t_{-i}$ is
 $$ \frac{\partial}{\partial \, t_{-i}} \, \frac{\phi_{i,r}({\sf s})}{u_{i,r}({\sf s})} = \frac{b_r \cdot \ln\left(\frac{t_{-i} + \delta b_r}{b_r}\right) + 2b_r - 2t_{-i} - \delta b_r}{\delta b_r^2}$$ 
 We are interested in the zero of this function.
 \begin{align*}
                  & b_r \cdot \ln\left(\frac{t_{-i} + \delta b_r}{b_r}\right) + 2b_r - 2t_{-i} - \delta b_r = 0 \\
  \Leftrightarrow & \ln\left(\frac{t_{-i} + \delta b_r}{b_r}\right) = 2\frac{t_{-i}}{b_r} + \delta - 2 \\
  \Leftrightarrow & \ \frac{t_{-i} + \delta b_r}{b_r} = e^{2\frac{t_{-i}}{b_r}} \cdot e^{\delta - 2} \\
  \Leftrightarrow & \ (-2)\frac{t_{-i}}{b_r} - 2\delta = (-2) e^{2\frac{t_{-i}}{b_r}} \cdot e^{\delta - 2} \\
  \Leftrightarrow & \left(-2\frac{t_{-i}}{b_r} - 2\delta \right) \cdot e^{(-2)\frac{t_{-i}}{b_r} - 2\delta} = (-2)e^{(-\delta) - 2} \\
  \Leftrightarrow & \ (-2)\frac{t_{-i}}{b_r} - 2\delta = W_{-1}\left(-2e^{(-\delta) - 2}\right) \\
  \Leftrightarrow & \ t_{-i} = \left(-\frac{1}{2}\right) b_r W_{-1}\left(-2e^{(-\delta) - 2}\right) - \delta b_r \\
  \Leftrightarrow & \ t_{-i} = b_r (w - \delta) \text{ for } w = \left(-\frac{1}{2}W_{-1}\left(-2e^{(-\delta) - 2}\right)\right)
 \end{align*}
 One can show that this function has a zero at $t_{-i} = b_r (w - \delta)$ for $w = \left(-\frac{1}{2}W_{-1}\left(-2e^{(-\delta) - 2}\right)\right)$. 
 $W_{-1}$ denotes the lower branch of the Lambert W function, which is used since $b_r < t_{-i} + s_i = t_{-i} + \delta b_r$ and therefore the value $W = (-2)\frac{t_{-i}}{b_r} - 2\delta < (-2)\frac{b_r (1-\delta)}{b_r} - 2\delta = -2 < -1$.
 Using the obtained values for both $s_i$ and $t_{-i}$, the worst-case ratio between the potential caused at a resource $r$ and the actual utility is
 $$\alpha_\delta^u = w \cdot \frac{\ln(w)-w + \delta + 1}{\delta} \text{ for } w = \left(-\frac{1}{2}W_{-1}\left(-2e^{(-\delta) - 2}\right)\right)$$
 For $t_{-i} > b_r (w - \delta)$, the ratio we seek only becomes smaller as $t_{-i}$ grows.
 This especially holds for $t_{-i} \geq b_r$, when the ratio between potential and utility becomes
 $$\frac{\phi_{i,r}({\sf s})}{u_{i,r}({\sf s})} = \frac{\int_{t_{-i}}^{t_{-i}+s_i} \frac{b_r}{x} \, \mathrm{d}x}{\frac{b_r \cdot s_i}{t_{-i}+s_i}} = (t_{-i} + s_i) \frac{\ln(1 + \frac{s_i}{t_{-i}})}{s_i}$$
 %
%
 %
 By the same methods used above, one can show that this reaches its maximum for $t_{-i} = b_r$ and $s_i = \delta b_r$, 
 which yields 
 $$\frac{\phi_{i,r}({\sf s})}{u_{i,r}({\sf s})} =  (1+\delta) \frac{\ln(1+\delta)}{\delta} \leq \alpha_\delta^u$$
 %
 Therefore, $\alpha_\delta^u$ is indeed the worst-case ratio possible.
\end{proof}
Now that we have an upper bound on $\alpha_\delta$,
we give a lower bound \LB, as well.
\begin{definition}
 Let $\delta > 0$. We define the lower bound \LB on $\alpha_\delta$ as 
 $$\text{\LB} := \frac{2 \sqrt{\delta^2(\delta+2)}+\delta-1}{4\delta-1}$$ 
\end{definition}
Again, we list some values for $\alpha_\delta^l$ in Table~\ref{tab:upperLowerBound}.
\begin{theorem}
 \label{theo:lowerBound}
 Let $\delta > 0$ and $\alpha <$ \LB.
 There is a $\delta$-share bandwidth allocation game without an $\alpha$-approximate pure Nash equilibrium.
\end{theorem}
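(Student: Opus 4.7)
The plan is to exhibit, for any $\alpha < \alpha_\delta^l$, a concrete $\delta$-share BAG with no $\alpha$-approximate pure Nash equilibrium. The construction is a tuning of the game $\mathcal{B}_0$ from Definition~\ref{def:basicBG} in which the free parameter $\gamma$ is optimized so as to maximize the utility ratio between the two available strategies of each main player.

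First, I would verify that in $\mathcal{B}_0$ (with $\delta \leq 1$ and $n\sigma + \delta = 1$) every strategy profile assigns utility $u = \gamma/(1+\gamma) + \delta$ to one of the two main players and $u' = \delta/(1+\gamma) + \gamma$ to the other. This is exactly the computation underlying Corollary~\ref{theo:noGeneralNE}. By the symmetry of the two strategy sets, the player currently receiving $u'$ can always switch to a strategy yielding $u$; hence the profile fails to be an $\alpha$-NE as soon as $u/u' > \alpha$, and it suffices to drive this single ratio above $\alpha$ for every profile.

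Second, I would maximize $R(\gamma) := u/u' = \frac{(1+\delta)\gamma + \delta}{\gamma^2 + \gamma + \delta}$ over $\gamma \in (0, \delta]$. Setting $R'(\gamma) = 0$ yields the quadratic $(1+\delta)\gamma^2 + 2\delta\gamma - \delta^2 = 0$ with positive root $\gamma^* = \frac{\delta(\sqrt{\delta+2} - 1)}{1+\delta}$, which satisfies $0 < \gamma^* < \delta$. Substituting $\gamma^*$ back in and using the relation $(1+\delta)(\gamma^*)^2 = \delta^2 - 2\delta\gamma^*$ to eliminate the squared term, the ratio collapses to
$$R(\gamma^*) = \frac{(1+\delta)^2}{2\delta\sqrt{\delta+2} + 1 - \delta}.$$
Rationalizing with the key identity $(2\delta\sqrt{\delta+2})^2 - (1-\delta)^2 = (1+\delta)^2 (4\delta-1)$ then reduces this to exactly $\alpha_\delta^l = \frac{2\delta\sqrt{\delta+2} + \delta - 1}{4\delta-1}$.

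Third, given any $\alpha < \alpha_\delta^l$, continuity of $R$ permits me to pick $\gamma$ sufficiently close to $\gamma^*$, together with any compatible $\sigma \in (0,\delta]$ and $n \in \mathbb{N}_0$ satisfying $n\sigma + \delta = 1$ (such a choice exists for every $\delta \leq 1$), so that $R(\gamma) > \alpha$ still holds; the resulting instance of $\mathcal{B}_0$ is the desired witness. The case $\delta > 1$ is handled by a variant of $\mathcal{B}_0$ obtained in analogy to the $\delta > 1$ construction in the proof of Corollary~\ref{theo:noGeneralNE} (choosing $n = 0$ and scaling parameters appropriately), with the same style of optimization. The main obstacle is the algebraic simplification in the second paragraph; once $R(\gamma^*)$ is brought into the closed form $\alpha_\delta^l$, the existence statement follows by a near-optimal parameter choice.
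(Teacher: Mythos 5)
Your proposal is correct and takes essentially the same route as the paper's own proof: both tune the parameter $\gamma$ in the game $\mathcal{B}_0$ of Definition~\ref{def:basicBG}, maximize the utility ratio $u/u'$ as a function of $\gamma$, find the same critical point $\gamma^* = \delta(\sqrt{\delta+2}-1)/(1+\delta)$, and evaluate the ratio there to obtain exactly $\alpha_\delta^l$. Your write-up merely supplies the algebraic simplification and the feasibility check $\gamma^* < \delta$ that the paper leaves implicit.
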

\begin{proof}
 We refer to the $\delta$-share BAG from Definition \ref{def:basicBG}.
 If we fix $\delta$, the ratio between $u$ and $u'$ becomes a function $f$ in $\gamma$.
 $$f(\gamma) := \frac{\delta + \frac{\gamma}{\delta + \gamma + n \cdot \sigma}}{\gamma + \frac{\delta}{\delta + \gamma + n \cdot \sigma}} = \frac{\gamma + \delta(\delta + \gamma + n \cdot \sigma)}{\delta + \gamma(\delta + \gamma + n \cdot \sigma)} = \frac{\gamma + \delta(\gamma + 1)}{\delta + \gamma(\gamma + 1)}$$
 Deriving $f$ with respect to $\gamma$ yields
 $$f'(\gamma) = \frac{\delta^2 - \delta \gamma (\gamma+2) - \gamma^2}{(\delta + \gamma^2 + \gamma)^2} = 0 \text{ for } \gamma_0 = \frac{\sqrt{\delta^3 + 2 \delta^2}-\delta}{\delta+1}$$
 One can check that this is indeed the only local maximum of $f$ for $\gamma > 0$.
 $$f(\gamma_0) = \frac{2 \sqrt{\delta^2(\delta+2)}+\delta-1}{4\delta-1} = \alpha_\delta^l$$ 
 
\end{proof}
\begin{figure}
 \centering
 \includegraphics[scale=.5]{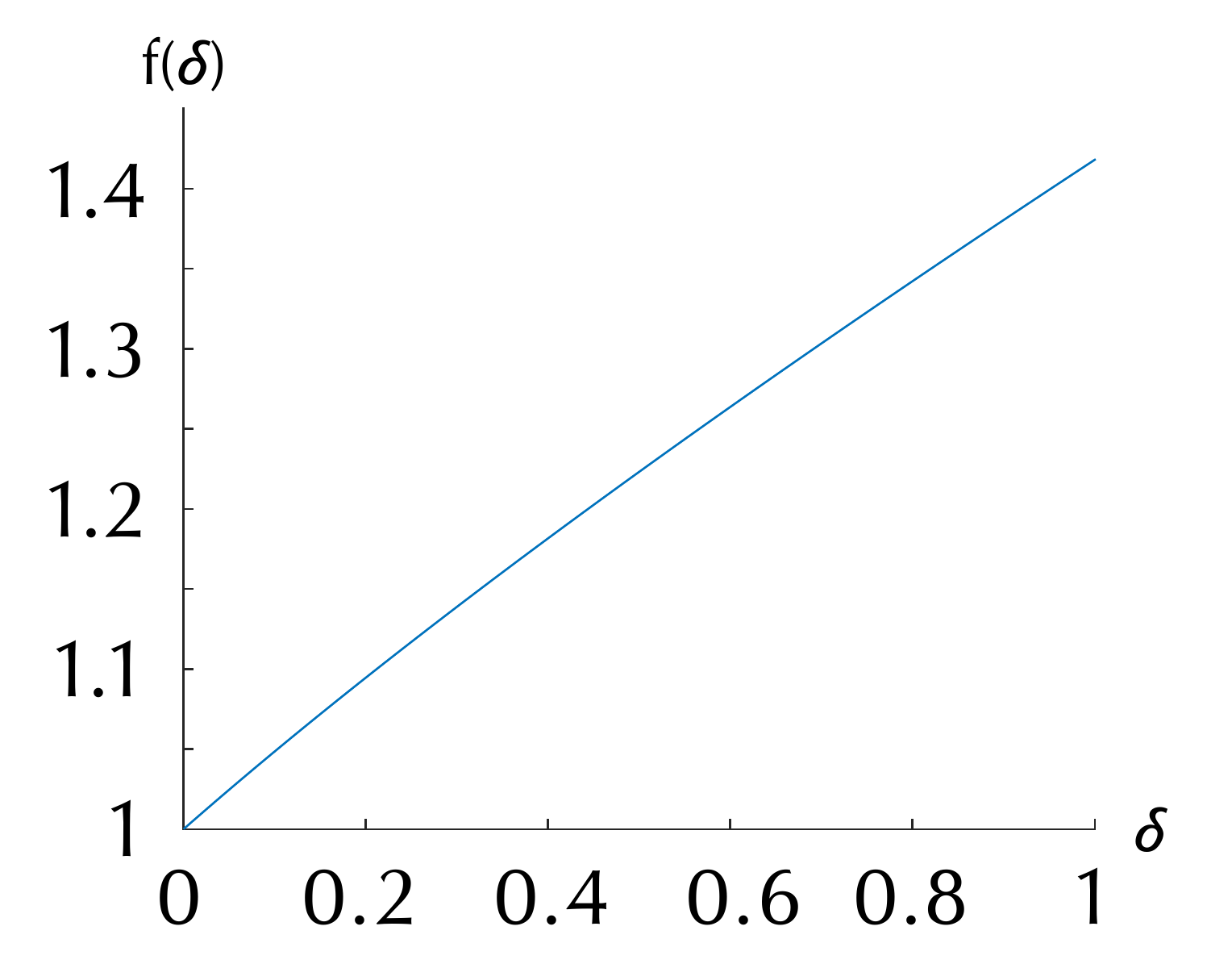}
 \caption{
           Upper bound $f(\delta) := \alpha_\delta^u$ of $\alpha_\delta$ for $\delta \in \ ]0,1]$.           
         }
 \label{fig:plot}
\end{figure}
The smaller $\delta$ is chosen, the better our result, i.e. 
the gap between \UB and \LB becomes smaller and \UB decreases.
While a value of $\delta = 1$ already is a realistic assumption as it states that the demand on a resource may not exceed its capacity,
it also means that one player is able to fully occupy any resource.
However, if we think back to our motivation, it usually takes several thousand clients to exhaust the capacity of a provider.
In this context, \UB -approximate Nash equilibria are close to the definition of (regular) Nash equilibria.
This can also be seen in Figure~\ref{fig:plot}, where $\alpha_\delta^u$ is plotted for $\delta \in \ ]0,1]$.
\begin{table}
 \centering
 \begin{tabular}{|r|r|r|r|r|r|r|r|r|r|r|}
 \hline
 $\delta$ & 0.1 & 0.2 & 0.3 & 0.4 & 0.5 & 0.6 & 0.7 & 0.8 & 0.9 & 1 \\
 \hline
 $\alpha_\delta^u$ & 1.0485 & 1.0946 & 1.1388 & 1.1816 & 1.2232 & 1.2637 & 1.3033 & 1.3422 & 1.3804 & 1.4181 \\
 \hline
 $\alpha_\delta^l$ & 1.0170 & 1.0335 & 1.0497 & 1.0656 & 1.0811 & 1.0964 & 1.1114 & 1.1261 & 1.1405 & 1.1547 \\
 \hline
 \end{tabular}
 \caption{Upper and lower bounds for $\alpha_\delta$ derived from $\delta$.}
 \label{tab:upperLowerBound}
\end{table}

Theorem~\ref{theo:lowerBound} states that for $\alpha$ below \LB,
an $\alpha$-NE cannot be guaranteed in general.
The following result shows that below this lower bound, it is computationally hard to both check for a given $\delta$-share BAG 
whether it has such an equilibrium and to compute it.
\begin{theorem}
 Let $\delta > 0$ and $\alpha <$ \LB.
 Computing an $\alpha$-approximate Nash equilibrium for any $\delta$-share {\BudgetGame} is {\sf NP}-hard.
 \label{theo:NPhard}
\end{theorem}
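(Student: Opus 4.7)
The plan is a Karp reduction from 3-SAT. Given a formula $\varphi$ with $n$ variables and $m$ clauses, I would construct in polynomial time a $\delta$-share BAG $\mathcal{B}_\varphi$ that admits an $\alpha$-approximate pure Nash equilibrium if and only if $\varphi$ is satisfiable. Any algorithm computing an $\alpha$-NE (or certifying non-existence) would then decide 3-SAT, yielding the claimed NP-hardness. The game $\mathcal{B}_0$ from Definition~\ref{def:basicBG} serves as the non-stable core of the reduction: by the argument behind Theorem~\ref{theo:lowerBound}, for $\alpha < \alpha_\delta^l$ its two main players keep cycling and admit no $\alpha$-NE on their own.

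On top of $\mathcal{B}_0$, I add two kinds of gadgets. For each variable $x_i$, a \emph{variable player} $V_i$ has two strategies $s_i^T$ and $s_i^F$ encoding a truth assignment. Each places a dominating demand on a private unit-capacity resource of $V_i$ (so that $V_i$'s two strategies are essentially tied up to a small tolerance induced by the clause resources), and $s_i^T$ places a small demand on every clause resource $C_j$ in which $x_i$ appears negated, while $s_i^F$ does so for every $C_j$ in which $x_i$ appears positively. The capacities of the $C_j$ are tuned so that $C_j$ is uncongested under a profile exactly when the encoded assignment satisfies $C_j$. I also enlarge the strategy space of main player $1$ of $\mathcal{B}_0$ by an \emph{escape} strategy $s_1^e$ that avoids $r_1,\ldots,r_4$ and instead puts small demand on every clause resource; the demands are calibrated so that $u_1({\sf s}_{-1}, s_1^e) > \alpha \cdot u$ holds precisely when every clause resource is uncongested, where $u$ is the utility value from Definition~\ref{def:basicBG}.

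For correctness, if $\varphi$ is satisfiable, then fixing any satisfying assignment on the $V_i$, letting player $1$ play $s_1^e$, and letting player $2$ play $s_2^1$ forms an $\alpha$-NE: no $V_i$ has an $\alpha$-improving deviation (the private resource dominates any clause-load changes), player $1$'s escape is her best response by construction, and player $2$ attains full utility on the cycle resources since player $1$ has vacated them. Conversely, if $\varphi$ is unsatisfiable, every assignment by the $V_i$ leaves at least one clause resource congested; $s_1^e$ never yields an $\alpha$-improvement, and the two main players are confined to the four strategies of $\mathcal{B}_0$, where Theorem~\ref{theo:lowerBound} guarantees that some $\alpha$-move is always available---so no $\alpha$-NE exists.

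The main obstacle is parameter calibration: choosing capacities and demands so that (i) all demands lie in $[0, \delta b_r]$, (ii) each $V_i$'s private-resource utility dominates her clause-resource utility strictly enough that no clause-load change gives her an $\alpha$-improvement, and (iii) $u_1({\sf s}_{-1}, s_1^e)$ crosses the $\alpha \cdot u$ threshold exactly at the satisfied/unsatisfied boundary. Point (iii) relies on the strict slack $\alpha_\delta^l - \alpha > 0$ granted by the hypothesis; (i) and (ii) follow from careful scaling of $\gamma, \sigma$ and the clause-resource capacities together with the dummy-player inflation technique already used in Definition~\ref{def:basicBG}.
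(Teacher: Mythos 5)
Your overall architecture --- the non-existence gadget $\mathcal{B}_0$ plus an ``escape'' strategy for a main player whose viability encodes an {\sf NP}-hard question --- is exactly the architecture of the paper's proof (which reduces from exact cover by 3-sets rather than 3-SAT). However, your 3-SAT instantiation has a gap that I do not think can be patched without essentially redesigning the gadget. The problem is that ``every clause is satisfied'' is a conjunction over $m$ clause resources, while $u_1({\sf s}_{-1},s_1^e)$ is a \emph{sum} of per-resource utilities. If exactly one clause resource $C_j$ is congested, player $1$ loses utility only on that one resource, and the $\delta$-share constraint caps the overload on $C_j$ (at most four non-dummy players, each with demand at most $\delta b_{C_j}$), so the loss is a constant fraction of a $1/m$ share of her escape utility. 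Consequently $u_1({\sf s}_{-1},s_1^e)$ cannot drop by more than a $1+O(1/m)$ factor when a single clause fails. To rule out the profile in which player $1$ sits on $s_1^e$ over an unsatisfiable assignment, you need \emph{someone} to have an $\alpha$-move from it; the only candidate is player $1$ returning to $\mathcal{B}_0$ for utility $u$, which requires $u_1({\sf s}_{-1},s_1^e) < u/\alpha$. Your calibration only gives $u_1({\sf s}_{-1},s_1^e) \leq \alpha u$ in that case --- the wrong side of the threshold, and a gap of a factor $\alpha^2$ that one mildly congested resource out of $m$ cannot produce. So for unsatisfiable $\varphi$ your game still has an $\alpha$-NE, and the reduction fails. (Your sentence ``$s_1^e$ never yields an $\alpha$-improvement, so the two main players are confined to $\mathcal{B}_0$'' conflates reachability with existence: an $\alpha$-NE need not be reachable by $\alpha$-moves, so you must destabilize every profile in which player $1$ plays $s_1^e$, not merely show she would not move there.)

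The paper's reduction avoids this by funnelling the escape through a \emph{single} resource $r'$: player $2$'s escape demand on $r'$ is set to exactly her stability threshold $u'/\alpha$, and $b_{r'}$ is set to exactly the load of $q-m$ fleeing subset-players plus $u'/\alpha$. Hence \emph{any} strict congestion of $r'$ pushes her utility strictly below $u'/\alpha$ and she abandons the escape; no multiplicative slack is needed. The global combinatorial condition (an exact 3-cover) is then enforced not by a sum-threshold but by counting: at most $q-m$ subset-players may flee to $r'$, every non-fleeing player must receive her full $3\delta$ from the element resources or she flees, and these two constraints together force the $m$ remaining players to form an exact cover. If you want to keep 3-SAT as the source problem, you would similarly need to aggregate clause satisfaction into a single threshold resource (e.g.\ via clause players who flee to an aggregator when unsatisfied), at which point you have reconstructed the paper's X3C-style counting argument.
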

\begin{proof}
 We reduce from the exact cover by 3-sets problem.
 Given an instance $\mathcal{I} = (\mathcal{U},\mathcal{W})$
 consisting of a set $\mathcal{U}$ with $|\mathcal{U}| = 3m$ and 
 a collection of subsets $\mathcal{W} = \mathcal{W}_1,\ldots,\mathcal{W}_q \subseteq \mathcal{U}$ with $|\mathcal{W}_k| = 3$ for every $k$,
 computing an exact cover of $\mathcal{U}$ in which every element is in exactly one subset is {\sf NP}-hard.
 For $\delta > 0$, we choose an instance $\mathcal{I}$ with $q-m \geq \frac{1}{\delta}$.
 Let $u' := \frac{\delta}{\delta + \gamma + \sigma} + \gamma$.
 
 From $\mathcal{I}$, we create a BAG $\mathcal{B}$ by combining two smaller games $\mathcal{B}_0$ and $\mathcal{B}_{\mathcal{I}}$.
 $\mathcal{B}_0$ is the BAG from Definition \ref{def:basicBG}.
 We label its two main players as player 1 and 2. 
 $\mathcal{B}_{\mathcal{I}}$ is constructed from $\mathcal{I}$ as follows.
 Every subset $\mathcal{W}_i$ is represented by a player $i+2$. 
 Every element $j \in \mathcal{U}$ is represented by a resource $r_j$ with capacity $b_{r_j} = 1$. 
 We assume that $2\delta$ exceeds this capacity, i.e. $2\delta > 1$.
 Otherwise, we refer to Definition \ref{def:basicBG} and add auxiliary players with singular strategy spaces to reduce the available capacity of the resources.
 We also introduce one additional resource $r'$ with $b_{r'} = (q-m) 3 \alpha \delta + \frac{u'}{\alpha}$. 
 The strategy space of each player $i \in \{3,\ldots,q+2\}$ is $\{s_i^1,s_i^2\}$ with
 $$ s_i^1(r_j) := \left\{\begin{array}{cl} \delta & \mbox{ \ \ \ if } j \in \mathcal{W}_{i-2} \\ 0 & \mbox{ \ \ \ else} \end{array}\right. \text{ \ \ \ and \ \ \ } s_i^2(r') = 3 \alpha \delta $$
 All other demands are 0.  
 
 \begin{figure}[h]
 \centering
 \includegraphics[scale=0.6]{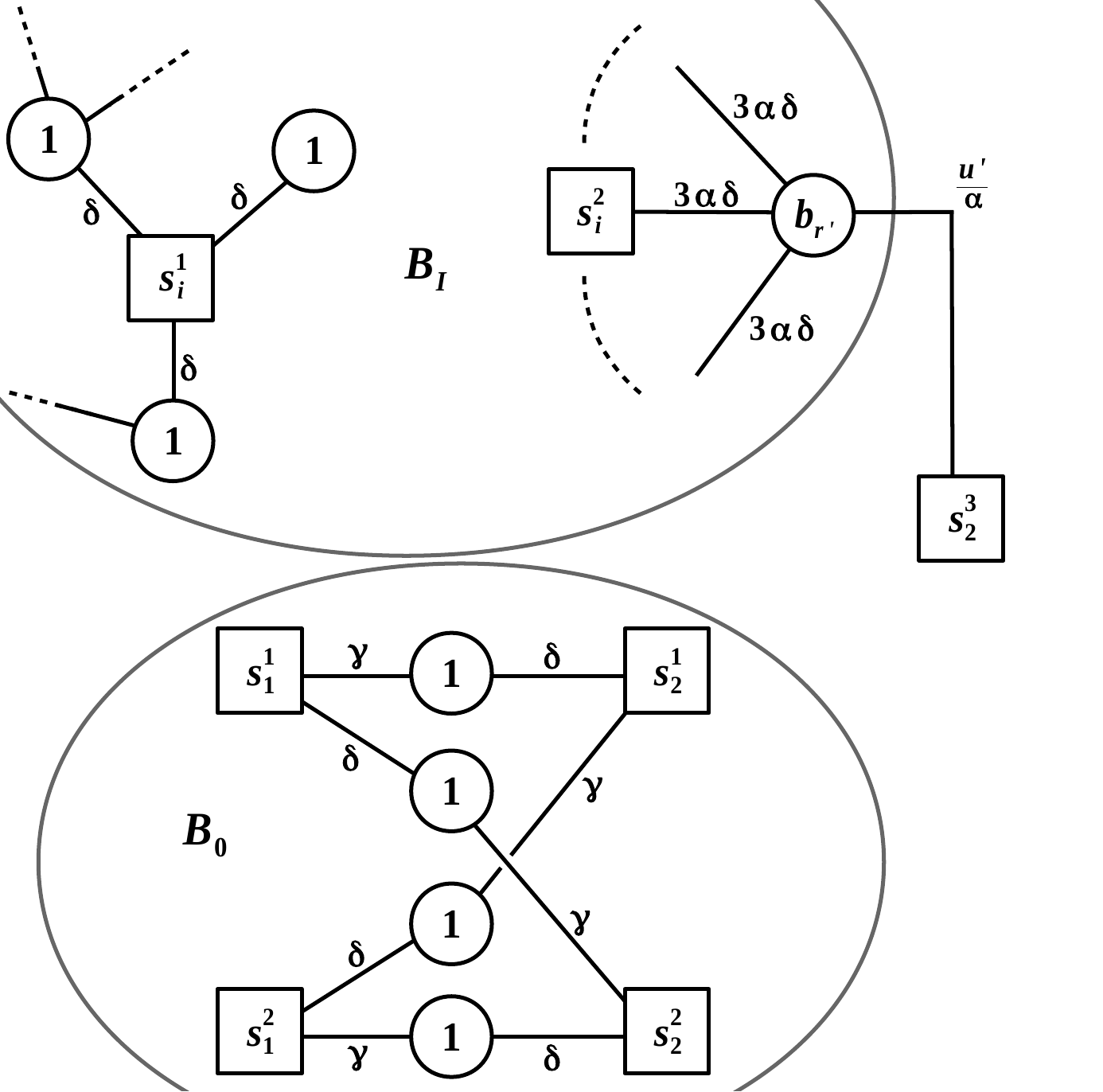}
 \caption{
	   The BAG $\mathcal{B}$, created from $\mathcal{B}_0$ and $\mathcal{B}_\mathcal{I}$.
         }
 \label{fig:NPcomplete}
\end{figure} 
 
 We combine the two unrelated games $\mathcal{B}_{\mathcal{I}}$ and $\mathcal{B}_0$ into one
 by creating the union of the corresponding sets
 and introducing one additional strategy $s_2^3$ for the second player from $\mathcal{B}_0$.
 This strategy uses only the resource $r'$ with $s_2^3(r') = \frac{u'}{\alpha}$.
 For the final result, see Figure \ref{fig:NPcomplete}.
 
 $\mathcal{B}$ is indeed a $\delta$-share BAG.
 For the resource $r'$, note that both $(q-m) 3 \alpha \delta + \frac{u'}{\alpha} \geq (q-m) 3 \alpha \delta \geq \frac{3 \alpha}{\delta} \delta = 3\alpha$
 and $(q-m) 3 \alpha \delta + \frac{u'}{\alpha} \geq 3 \alpha + \frac{u'}{\alpha} \geq \frac{3}{\alpha} + \frac{u'}{\alpha} \geq \frac{2\delta(1-\delta)}{\alpha\delta} + \frac{u'}{\alpha} \geq \frac{u'(1-\delta)}{\alpha\delta} + \frac{u'}{\alpha} = \frac{u'}{\alpha\delta}$,
 so no demand on $r'$ exceeds $\delta b_{r'}$.
 
 We already know that $\mathcal{B}_0$ has no $\alpha$-NE.
 Since the second player now has an additional strategy $s_2^3$,
 this strategy has to be part of any $\alpha$-NE ${\sf s} = (s_1,s_2,\ldots,s_{q+2})$ of $\mathcal{B}$.
 However, if $u_{2,r'}({\sf s}) < s_2(r')$,
 the player will dismiss this strategy and change back to $s_2^1$ or $s_2^2$.
 Therefore, at most $(q-m)$ players $i \in \{3,\ldots,q+2\}$ from $\mathcal{B}_{\mathcal{I}}$ are allowed to play $s_i^2$.
 Every player $i$ with strategy $s_i^1$ and utility below $3 \delta$ will switch to $s_i^2$.
 Therefore, exactly $m$ players pick $s_i^1$ in ${\sf s}$ and they form an exact cover over the resources $r_1,\ldots,r_{3m}$.
\end{proof}

The proof also shows that the decision version of this problem is ${\sf NP}$-complete.
However, for $\alpha \geq$ \UB and if the utilities $u_i^{\text{opt}} := \max_{s_i \in \mathcal{S}_i} \sum_{r \in \mathcal{R}} \min(s_i(r),b_r)$ of the most-profitable strategies of the players do not 
differ too much from each other, approximate Nash equilibria can be computed efficiently.
For example, symmetric games always have this property.

We do not impose any restriction on how much the demands of a single player may deviate from another between her different strategies.
However, we can assume that $u_i^{\text{opt}}$ and the potential utility of any other strategy differ by a factor of at most $n \delta$.
Otherwise, that strategy would never be chosen.
\begin{lemma}
 Let $\mathcal{B}$ be a $\delta$-share BAG. 
 Then $u_i({\sf s}) \geq \frac{u_i^{\text{opt}}}{(n \delta)^2}$ for all players $i \in \mathcal{N}$ and strategy profile ${\sf s}$. 
 \label{lem:restrictedStrategies}
\end{lemma}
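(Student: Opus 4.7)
The plan is to first invoke the pruning argument sketched in the paragraph preceding the lemma: any strategy $s_i \in \mathcal{S}_i$ that a rational player could actually end up using satisfies $v_i(s_i) := \sum_{r \in \mathcal{R}} \min(s_i(r), b_r) \geq u_i^{\text{opt}}/(n\delta)$, since otherwise it is dominated by any strategy attaining $u_i^{\text{opt}}$ and can be discarded without affecting the set of equilibria under consideration. Granted this, the lemma reduces to showing the cleaner per-player inequality $u_i({\sf s}) \geq v_i(s_i)/(n\delta)$ for the strategy $s_i$ that player $i$ plays in ${\sf s}$.

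The core step is a resource-wise bound, $u_{i,r}({\sf s}) \geq \min(s_i(r), b_r)/(n\delta)$ for every $r \in \mathcal{R}$. Here the main observation is that in a $\delta$-share BAG every demand is capped by $\delta b_r$, so the total demand satisfies $T_r({\sf s}) \leq n \delta b_r$. Substituting this upper bound into the utility formula splits into two cases. If $T_r({\sf s}) \leq b_r$, then $u_{i,r}({\sf s}) = s_i(r) \geq \min(s_i(r), b_r) \geq \min(s_i(r), b_r)/(n\delta)$ whenever $n\delta \geq 1$. If $T_r({\sf s}) > b_r$, then
\begin{equation*}
u_{i,r}({\sf s}) \;=\; \frac{b_r\, s_i(r)}{T_r({\sf s})} \;\geq\; \frac{b_r\, s_i(r)}{n\delta\, b_r} \;=\; \frac{s_i(r)}{n\delta} \;\geq\; \frac{\min(s_i(r), b_r)}{n\delta},
\end{equation*}
using $s_i(r) \geq \min(s_i(r), b_r)$ in the final step. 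The edge case $n\delta < 1$ forces $T_r({\sf s}) < b_r$ at every resource, so $u_{i,r}({\sf s}) = s_i(r) = \min(s_i(r), b_r)$ and the argument collapses to $u_i({\sf s}) = v_i(s_i) \geq u_i^{\text{opt}}/(n\delta)$, which is already stronger than the claim.

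Summing the per-resource inequality over $r \in \mathcal{R}$ gives $u_i({\sf s}) \geq v_i(s_i)/(n\delta)$, and chaining with the pruning bound $v_i(s_i) \geq u_i^{\text{opt}}/(n\delta)$ yields the desired $u_i({\sf s}) \geq u_i^{\text{opt}}/(n\delta)^2$. The main obstacle I anticipate is not the calculation itself, which is a direct consequence of the $\delta$-share capacity restriction, but making the pruning step precise: one has to justify that strategies with $v_i(s_i) < u_i^{\text{opt}}/(n\delta)$ can be deleted from $\mathcal{S}_i$ without changing the family of approximate equilibria relevant to the later algorithmic application, rather than asserting the lemma naively for every conceivable strategy profile.
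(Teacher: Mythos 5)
Your proposal is correct and follows essentially the same route as the paper: a per-resource bound $u_{i,r}({\sf s}) \geq \min(s_i(r),b_r)/(n\delta)$ derived from $T_r({\sf s}) \leq n\delta b_r$, combined with the domination ("pruning") argument that any strategy with $\sum_r \min(s_i(r),b_r) < u_i^{\text{opt}}/(n\delta)$ would never be played because $s_i^{\text{opt}}$ always yields more. Your explicit handling of the $n\delta < 1$ edge case is a small point of added care that the paper glosses over, but the argument is otherwise the same.
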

\begin{proof}
 Let $s_i^{\text{opt}}$ be the strategy of $i$ associated with $u_i^{\text{opt}}$.
 First, we show that $u_{i,r}({\sf s}_{-i},s_i^{\text{opt}}) \geq \frac{s_i^\text{opt}(r)}{n\delta}$ holds for all ${\sf s}_{-i}$ and $r$.
 If $T_r({\sf s}_{-i},s_i^{\text{opt}}) \leq b_r$, then the claim is true, as $u_{i,r}({\sf s}_{-i},s_i^{\text{opt}}) = s_i^{\text{opt}}(r)$.
 For $T_r({\sf s}_{-i},s_i^{\text{opt}}) > b_r$, $u_{i,r}({\sf s}_{-i},s_i^{\text{opt}}) = \frac{s_i^{\text{opt}}(r) \cdot b_r}{s_i^{\text{opt}} + T_r({\sf s}_{-i})} \geq \frac{s_i^{\text{opt}}(r) \cdot b_r}{n \delta b_r} = \frac{s_i^{\text{opt}}(r)}{n \delta}$.
 By summing up over all resources, we obtain $u_{i}({\sf s}_{-i},s_i^{\text{opt}}) = \sum_{r \in \mathcal{R}} \frac{s_i^{\text{opt}}(r)}{n\delta} \geq \frac{u_i^{\text{opt}}}{n \delta}$.
 So we can assume wlog that for all strategies $s_i \in \mathcal{S}_i$, $\sum_{r \in \mathcal{R}} \min(s_i(r),b_r) \geq \frac{u_i^{\text{opt}}}{n \delta}$.
 Otherwise, the strategy $s_i^{\text{opt}}$ would yield a higher utility in all situations. 
 By the same arguments made above, this implies $u_i({\sf s}_{-i},s_i) \geq \frac{u_i^{\text{opt}}}{(n \delta)^2}$.
 
\end{proof}
We further need an additional lemma to bound the potential of a BAG in respect to its social welfare.
\begin{lemma}
 \label{lemma:socialWelfareLowerBound}
 For any $\delta$-share BAG and any strategy profile ${\sf s}$, $(1+\log(n \delta)) \cdot u({\sf s}) \geq \phi({\sf s})$. 
\end{lemma}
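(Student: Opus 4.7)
The plan is to prove the inequality resource by resource. Let $u_r({\sf s}) := \sum_{i \in \mathcal{N}} u_{i,r}({\sf s})$ denote the contribution of $r$ to the social welfare, so that $u({\sf s}) = \sum_{r \in \mathcal{R}} u_r({\sf s})$ and $\phi({\sf s}) = \sum_{r \in \mathcal{R}} \phi_r({\sf s})$. It therefore suffices to show $\phi_r({\sf s}) \leq (1+\log(n\delta)) \, u_r({\sf s})$ for every $r \in \mathcal{R}$; summing yields the lemma.

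The natural next step is a case distinction matching the one in the definition of $\phi_r$. If $T_r({\sf s}) \leq b_r$, then $u_{i,r}({\sf s}) = s_i(r)$ for every player, so $u_r({\sf s}) = T_r({\sf s}) = \phi_r({\sf s})$, and the inequality reduces to $1 \leq 1 + \log(n\delta)$, which holds in the regime of interest $n\delta \geq 1$ (note that when $n\delta < 1$ no resource can ever be overloaded and $\phi = u$ identically, so the statement is essentially vacuous in that case).

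For the second case $T_r({\sf s}) > b_r$, the definition of $u_{i,r}$ gives $u_{i,r}({\sf s}) = b_r \, s_i(r)/T_r({\sf s})$, which sums over $i$ to $u_r({\sf s}) = b_r$. On the potential side, evaluating the integral yields
\[
\phi_r({\sf s}) = b_r + b_r \ln\!\left(\frac{T_r({\sf s})}{b_r}\right) = b_r\left(1 + \ln\!\left(\frac{T_r({\sf s})}{b_r}\right)\right).
\]
The decisive ingredient is the $\delta$-share restriction: because $s_i(r) \leq \delta b_r$ for each of the $n$ players, the total demand satisfies $T_r({\sf s}) \leq n \delta b_r$. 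Hence $\ln(T_r({\sf s})/b_r) \leq \log(n\delta)$, giving $\phi_r({\sf s})/u_r({\sf s}) \leq 1 + \log(n\delta)$.

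The argument is essentially mechanical once the per-resource reduction is made and the two cases are handled; the only real step is invoking the $\delta$-share cap to bound $T_r({\sf s})$ by $n \delta b_r$. I do not expect any genuine obstacle beyond bookkeeping, and summing the per-resource bounds concludes the proof.
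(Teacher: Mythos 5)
Your proof is correct and follows essentially the same route as the paper: decompose per resource, note $u_r({\sf s}) = b_r$ when the resource is overloaded, evaluate the integral to get $\phi_r({\sf s}) = b_r(1+\ln(T_r({\sf s})/b_r))$, and apply the $\delta$-share cap $T_r({\sf s}) \leq n\delta b_r$. Your explicit handling of the underloaded case and the remark on the implicit assumption $n\delta \geq 1$ is a small point the paper glosses over, but the argument is the same.
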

\begin{proof}
 Consider a resource $r$ and let $u_r({\sf s})$ be the total utility obtained from $r$ by all players, i.e. $u_r({\sf s}) := \sum_{i \in \mathcal{N}} u_{i,r}({\sf s})$.
 We show that $(1+\log(n \delta)) \cdot u_r({\sf s}) \geq \phi_r({\sf s})$, which also proves the lemma.
 We assume that $T_r({\sf s}) \geq b_r$, otherwise $\phi_r({\sf s}) = u_r({\sf s})$.
 So $u_r({\sf s}) = b_r$, while
 $\phi_r({\sf s}) = b_r + \int_{b_r}^{T_r({\sf s})} \! \frac{b_r}{x} \, \mathrm{d}x = b_r \left(1 + \int_{b_r}^{T_r({\sf s})} \! \frac{1}{x} \, \mathrm{d}x \right)$.
 Therefore $\frac{\phi_r({\sf s})}{u_r({\sf s})} = 1 + \int_{b_r}^{T_r({\sf s})} \! \frac{1}{x} \, \mathrm{d}x = 1 + \ln(T_r({\sf s})) - \ln(b_r) \leq 1 + \ln(n \delta b_r) - \ln(b_r) = 1 + \ln(n \delta)$ 
\end{proof}
\begin{theorem}
 Let $\mathcal{B}$ be a $\delta$-share BAG for $\delta \leq 1$,
 $\varepsilon > 0$
 and $\lambda \in \ ]0,1]$ such that for all players $i,j$ $u_i^{\text{opt}} \geq \lambda u_j^{\text{opt}}$.
 Then $\mathcal{B}$ reaches an $(\alpha_\delta^u + \varepsilon)$-approximate NE in $\mathcal{O}\left(\log(n) \cdot n^5 \cdot (\varepsilon \lambda)^{-1}\right)$ $(\alpha_\delta^u + \varepsilon)$-moves. 
\end{theorem}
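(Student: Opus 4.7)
The plan is to use the potential function $\phi$ from Section~\ref{sec:model} and argue that each $(\alpha_\delta^u+\varepsilon)$-move increases $\phi$ by a \emph{multiplicative} factor that scales as $\varepsilon\lambda/\mathrm{poly}(n)$, after which a standard $\log(\phi_{\max}/\phi_{\min})$ argument yields the claimed polynomial convergence bound. The structure is the usual Chien--Sinclair style, with the potential-to-utility gap of $\alpha_\delta^u$ absorbed into the slack provided by $\varepsilon$.

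First, I would quantify the per-move potential increase. From the proof of Theorem~\ref{theo:lowerBoundExistance} we already have $\Delta\phi \geq u_i({\sf s}') - \alpha_\delta^u\cdot u_i({\sf s})$, and the $(\alpha_\delta^u+\varepsilon)$-move condition $u_i({\sf s}')>(\alpha_\delta^u+\varepsilon)\,u_i({\sf s})$ gives $\Delta\phi \geq \frac{\varepsilon}{\alpha_\delta^u+\varepsilon}\,u_i({\sf s}')$. Lemma~\ref{lem:restrictedStrategies} lower-bounds $u_i({\sf s}')\geq u_i^{\mathrm{opt}}/(n\delta)^2$, the $\lambda$-condition gives $u_i^{\mathrm{opt}}\geq\frac{\lambda}{n}\sum_{j}u_j^{\mathrm{opt}}\geq\frac{\lambda}{n}\,u({\sf s})$ (using $u({\sf s})\leq\sum_j u_j^{\mathrm{opt}}$), and Lemma~\ref{lemma:socialWelfareLowerBound} gives $u({\sf s})\geq \phi({\sf s})/(1+\log(n\delta))$. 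Chaining these estimates and using $\delta\leq 1$ yields
\[
\Delta\phi \;\geq\; c\cdot \phi({\sf s}),\qquad c\;=\;\Omega\!\left(\frac{\varepsilon\lambda}{n^3\log n}\right).
\]

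Second, I would bound the range of $\phi$. Lemma~\ref{lemma:socialWelfareLowerBound} combined with $u({\sf s})\leq n\,u_{\max}^{\mathrm{opt}}$ gives $\phi({\sf s})\leq (1+\log n)\,n\,u^{\mathrm{opt}}_{\max}$. Once at least one player has received positive utility (which the first improving move guarantees via Lemma~\ref{lem:restrictedStrategies}), $\phi({\sf s})\geq u_i({\sf s})\geq \lambda\,u^{\mathrm{opt}}_{\max}/n^2$. Hence $\log(\phi_{\max}/\phi_{\min}) = O(\log(n/\lambda))$, so the number of $(\alpha_\delta^u+\varepsilon)$-moves is at most $\log(\phi_{\max}/\phi_{\min})/\log(1+c) = O(\log(n/\lambda)/c)$; absorbing logarithmic factors into polynomial ones gives the stated $\mathcal{O}(\log(n)\,n^5/(\varepsilon\lambda))$ bound.

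The hard part will be forging the chain $u_i({\sf s}')\!\to\!u_i^{\mathrm{opt}}\!\to\!u({\sf s})\!\to\!\phi({\sf s})$ so that the per-move increase is multiplicative rather than merely additive. The $\lambda$-condition is crucial at the middle link, since without it a single player whose $u_i^{\mathrm{opt}}$ is vanishingly small relative to her peers would decouple $u_i^{\mathrm{opt}}$ from the overall welfare and break the argument. A secondary subtlety, avoiding $\phi({\sf s}_0)=0$ in the initial profile, is handled by observing that Lemma~\ref{lem:restrictedStrategies} applies to \emph{every} profile, so a single move already places us into the regime where the multiplicative argument is valid.
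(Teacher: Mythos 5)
Your proposal follows essentially the same route as the paper: the identical chain $\Delta\phi \geq \varepsilon\, u_i({\sf s}) \geq \varepsilon\, u_i^{\text{opt}}/(n\delta)^2 \geq \frac{\varepsilon\lambda}{n(n\delta)^2}\, u({\sf s}) \geq \frac{\varepsilon\lambda}{n(1+\log(n\delta))(n\delta)^2}\,\phi({\sf s})$ via Lemmas~\ref{lem:restrictedStrategies} and~\ref{lemma:socialWelfareLowerBound}, followed by the same doubling/$\log(\phi_{\max}/\phi_{\min})$ count. The one thing to fix is your lower bound on $\phi_{\min}$: apply Lemma~\ref{lem:restrictedStrategies} to every player (or to the player attaining $u^{\text{opt}}_{\max}$), which gives $\phi_{\max}/\phi_{\min} \leq (n\delta)^2$ as in the paper, rather than routing through $\lambda$ --- your version leaks an extra $\log(1/\lambda)$ factor that is not a ``logarithmic factor absorbed into polynomials'' and does not fit inside the stated $\mathcal{O}\left(\log(n)\cdot n^5\cdot(\varepsilon\lambda)^{-1}\right)$ bound when $\lambda$ is very small.
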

\begin{proof}
 Let $i$ be the player performing an $(\alpha_\delta^u + \varepsilon)$-move under the strategy profile ${\sf s}$, leading to the strategy profile ${\sf s}'$.
 We can bound the increase in the potential:
 \begin{align*}
  \Phi({\sf s}') - \Phi({\sf s}) &\geq \varepsilon \cdot u_i({\sf s}) \overset{(1)}{\geq} \frac{\varepsilon}{(n \delta)^2} \cdot u_i^{\text{opt}} \overset{(2)}{\geq} \frac{\varepsilon \cdot \lambda}{n(n \delta)^2} \cdot u({\sf s}) \overset{(3)}{\geq} \frac{\varepsilon \cdot \lambda}{n (1+\log(n)) (n \delta)^2} \cdot \Phi({\sf s})
 \end{align*} 
 Inequalities $(1)$ and $(3)$ follow by Lemma~\ref{lem:restrictedStrategies} and \ref{lemma:socialWelfareLowerBound} respectively
 while $(2)$ holds due to $u({\sf s}) = \sum_{j \in \mathcal{N}} u_j({\sf s}) \leq \sum_{j \in \mathcal{N}} u_j^{\text{opt}} \leq \frac{n}{\lambda} u_i^{\text{opt}}$.
 For convenience, we define $\beta := \frac{\varepsilon \cdot \lambda}{n (1+\log(n)) (n \delta)^2}$. 
 Assume that we need $t$ steps to increase the potential from $\Phi({\sf s})$ to $2\Phi({\sf s})$.
 Then $\Phi({\sf s}) = 2\Phi({\sf s}) - \Phi({\sf s}) \geq \beta \cdot t \cdot \Phi({\sf s}) \Leftrightarrow t \leq \beta^{-1}$.
 So in order to double the current potential of $\mathcal{B}$, we need at most $\beta^{-1}$ improving moves.
 Therefore, the game has to reach a corresponding equilibrium after at most $\log\left(\frac{\Phi_{\max}}{\Phi_{\min}}\right) \cdot \beta^{-1}$ improving moves,
 with $\Phi_{\max}$ and $\Phi_{\min}$ denoting the maximum and minimum potential of $\mathcal{B}$, respectively.
 Since $\Phi_{\max} \leq \sum_{i \in \mathcal{N}} u_i^{\text{opt}}$ due to $\delta \leq 1$ and $\Phi_{\min} \geq \sum_{i \in \mathcal{N}} \frac{u_i^{\text{opt}}}{(n \delta)^2}$,
 we can bound $\log\left(\frac{\Phi_{\max}}{\Phi_{\min}}\right) \leq (n\delta)^2$.
 
\end{proof}
To conclude this section, we turn towards the quality of $\alpha$-approximate Nash equilibria.
Although no player has an incentive to change her strategy, the social welfare,
which is the total utility of all players combined, may not be optimal.
To express how well Nash equilibria perform in comparison to a globally determined optimal solution,
the price of anarchy has been introduced.
\begin{theorem}
\label{theo:PoA}
 The $\alpha$-approximate price of anarchy of any $\delta$-share bandwidth allocation game is at most $\alpha+1$. 
 For every $\varepsilon > 0$, there is a $\delta$-share bandwidth allocation game with an $\alpha$-approximate Price of Anarchy of $\alpha+1-\varepsilon$.
\end{theorem}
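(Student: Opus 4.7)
The plan is a Roughgarden-style $(1,1)$-smoothness argument. Summing the $\alpha$-NE condition $\alpha \cdot u_i({\sf s}) \geq u_i({\sf s}_{-i}, opt_i)$ over all players gives $\alpha \cdot u({\sf s}) \geq \sum_{i \in \mathcal{N}} u_i({\sf s}_{-i}, opt_i)$, so the upper bound reduces to the local smoothness inequality
\[
u_r(opt) \;\leq\; u_r({\sf s}) \;+\; \sum_{i \in \mathcal{N}} u_{i,r}({\sf s}_{-i}, opt_i) \qquad \text{for every } r \in \mathcal{R},
\]
where $u_r(\cdot) := \sum_{i} u_{i,r}(\cdot)$. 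Summing this over $r$ and combining with the previous bound yields $u(opt) \leq (\alpha+1) u({\sf s})$.

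\textbf{Per-resource inequality.} I would fix $r$ and abbreviate $T_s := T_r({\sf s})$, $T_o := T_r(opt)$, $s_i := s_i(r)$, $o_i := opt_i(r)$, $b := b_r$. The inequality is trivial whenever $T_s \geq b$, since then $u_r({\sf s}) = b \geq u_r(opt)$. Otherwise $u_r({\sf s}) = T_s$ and one splits on $T_o$. If $T_o > b$, the elementary bound $u_{i,r}({\sf s}_{-i}, opt_i) \geq \tfrac{o_i\, b}{T_s + T_o}$ (which follows from $o_i + T_s - s_i \leq T_s + T_o$ and $T_o > b$) sums to $\sum_i u_{i,r}({\sf s}_{-i}, opt_i) \geq \tfrac{T_o\, b}{T_s + T_o}$, and a one-line rearrangement shows $T_s + \tfrac{T_o\, b}{T_s + T_o} \geq b = u_r(opt)$. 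If $T_o \leq b$, I would partition the players into those with $o_i + T_s - s_i \leq b$ (whose deviation utility equals $o_i$, contributing zero to $T_o - \sum_i u_{i,r}({\sf s}_{-i}, opt_i)$) and the remaining set $Q$; for $i \in Q$ the identity $o_i - u_{i,r}({\sf s}_{-i}, opt_i) = o_i \cdot \tfrac{o_i + T_s - s_i - b}{o_i + T_s - s_i} \leq o_i + T_s - s_i - b$, combined with $\sum_{i \in Q} o_i \leq T_o \leq b$ and $|Q|(T_s - b) \leq T_s - b$ (valid because $T_s < b$ and $|Q| \geq 1$), collapses the sum to at most $T_s$. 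This $T_o \leq b$ sub-case is the only real technical obstacle: the uniform lower bound that works for $T_o > b$ is too weak here, and one has to exploit both $T_o \leq b$ and $|Q| \geq 1$ simultaneously.

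\textbf{Matching construction.} For the lower bound I would construct, for each large $n$, a $\delta$-share BAG with $n$ players and $n+1$ unit-capacity resources $r_0, r_1, \ldots, r_n$. Player $i$ has two strategies: $s_i^A$, placing demand $\delta$ on $r_0$ only, and $s_i^B$, placing demand $\epsilon := 1/n$ on $r_0$ and demand $d := \alpha/n - \tfrac{1}{n(n-1)\delta + 1}$ on $r_i$. The all-$A$ profile gives every player utility $1/n$ from $r_0$, and a unilateral deviation to $s_i^B$ is calibrated to yield exactly $\tfrac{\epsilon}{(n-1)\delta + \epsilon} + d = \alpha/n = \alpha \cdot \tfrac{1}{n}$, so it is an $\alpha$-NE with social welfare $1$. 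The all-$B$ profile saturates $r_0$ (since $n\epsilon = 1$), distributing $1/n$ to each player, while each $r_i$ is used only by player $i$ for utility $d$; its social welfare is therefore $1 + nd = \alpha + 1 - \tfrac{1}{(n-1)\delta + 1/n}$. Choosing $n$ sufficiently large (and at least $\max\{1/\delta, \alpha/\delta\}$ to satisfy the $\delta$-share constraint) drives the $\alpha$-PoA beyond $\alpha+1-\varepsilon$ for any prescribed $\varepsilon > 0$, which completes the plan.
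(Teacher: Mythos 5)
Your proof is correct, and at its core it follows the same strategy as the paper's: the upper bound is a $(1,1)$-smoothness argument driven by the $\alpha$-Nash inequality, and the lower bound is a gadget in which a unilateral deviation away from a congested shared resource yields an improvement of exactly a factor $\alpha$. The differences are in execution. For the upper bound, the paper establishes $\sum_i u_i({\sf s}_{-i},opt_i) \geq u(opt) - u({\sf s})$ via one global chain of inequalities, replacing each deviation payoff by $\min\left(opt_i(r),\, b_r - \sum_{i' \neq i} u_{i',r}({\sf s})\right)$ and then partitioning the resources according to which branch of the $\min$ is active; you instead prove the local inequality $u_r(opt) \leq u_r({\sf s}) + \sum_i u_{i,r}({\sf s}_{-i},opt_i)$ resource by resource with a case split on $T_r({\sf s})$ and $T_r(opt)$ against $b_r$. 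I checked your $T_o \leq b$ sub-case, including the use of $\sum_{i \in Q} o_i \leq b$ together with $|Q|(T_s - b) \leq T_s - b$; it is sound (with the trivial remark that for $Q = \emptyset$ the bound holds directly rather than via that last step), and the per-resource formulation has the advantage of making the $(1,1)$-niceness exploited later in Corollary~\ref{cor:convergenceSpeed} explicit. For the lower bound, the paper uses only two resources --- dummy players congest a unit-capacity resource while the optional players can escape to a single large resource of capacity $\frac{\alpha n_2}{\delta N}$ --- whereas you give each player her own private unit-capacity resource and calibrate its demand $d$ so that the deviation payoff is exactly $\alpha/n$; both constructions respect the $\delta$-share constraint for $n$ large enough and both push the gap to $\alpha + 1 - \varepsilon$.
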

\begin{proof}
 We begin by showing that $\alpha+1$ is an upper bound for the $\alpha$-approximate Price of Anarchy of a {\BudgetGame} $\mathcal{B}$.
 For this, we do not need to consider $\delta$.
 Let $s$ be an $\alpha$-approximate Nash equilibrium of $\mathcal{B}$
 and $opt$ be the strategy profile with the maximum social welfare. 
 We can lower bound the social welfare of $s$ as follows:
 \begin{align}
  \sum_{i \in \mathcal{N}} u_i(s) = & \sum_{r \in \mathcal{R}} \sum_{i \in \mathcal{N}} u_{i,r}(s) \geq \alpha^{-1} \cdot \sum_{r \in \mathcal{R}} \sum_{i \in \mathcal{N}} u_{i,r}(s_{-i},opt_i) \label{eq:Nash} \\
  = & \alpha^{-1} \cdot \sum_{r \in \mathcal{R}} \sum_{i \in \mathcal{N}} \min\left(opt_i(r), \frac{b_r \cdot opt_i(r)}{opt_i(r) + \sum_{i' \ne i} s_{i'}(r)} \right) \label{eq:utilf} \\
  \ge & \alpha^{-1} \cdot \sum_{r \in \mathcal{R}} \sum_{i \in \mathcal{N}} \min\left(opt_i(r), b_r-\sum_{i' \ne i} u_{i',r}(s)\right) \label{eq:newUtilf} \\
  \ge & \alpha^{-1} \cdot \sum_{r \in \mathcal{R}} \sum_{i \in \mathcal{N}} \min\left(u_{i,r}(opt), b_r-\sum_{i' \ne i} u_{i',r}(s)\right) \notag \\
  \ge & \alpha^{-1} \cdot \sum_{r \in \mathcal{R}_1} \sum_{i \in \mathcal{N}} \min\left(u_{i,r}(opt), b_r-\sum_{i' \ne i} u_{i',r}(s)\right)   \notag \\
      & + \alpha^{-1} \cdot \sum_{r \in \mathcal{R}_2} \sum_{i \in \mathcal{N}} u_{i,r}(opt) \label{eq:splitResources} \\                    
  \ge & \alpha^{-1} \cdot \sum_{r \in \mathcal{R}_1} \left(b_r-\sum_{i \in \mathcal{N}} u_{i,r}(s)\right) + \alpha^{-1} \cdot \sum_{r \in \mathcal{R}_2} \sum_{i \in \mathcal{N}} u_{i,r}(opt) \notag \\
  \ge &\alpha^{-1} \cdot \sum_{r \in \mathcal{R}_1} \left( \sum_{i \in N} u_{i,r}(opt) - \sum_{i \in \mathcal{N}} u_{i,r}(s)\right)   \notag \\
      & + \alpha^{-1} \cdot \sum_{r \in \mathcal{R}_2} \sum_{i \in \mathcal{N}} u_{i,r}(opt) \notag \\
  \ge & \alpha^{-1} \cdot \sum_{r \in \mathcal{R}} \sum_{i \in N} u_{i,r}(opt) - \alpha^{-1} \cdot \sum_{r\in\mathcal{R}_1}\sum_{i \in \mathcal{N}} u_{i,r}(s) \notag \\
  \ge & \alpha^{-1} \cdot \sum _{i \in \mathcal{N}} u_i(opt) - \alpha^{-1} \cdot \sum_{r \in \mathcal{R}} \sum_{i \in \mathcal{N}} u_{i,r}(s) \notag \\
  \ge & \alpha^{-1} \sum _{i \in \mathcal{N}} u_i(opt) - \alpha^{-1} \sum_{i \in \mathcal{N}} u_i(s) \label{eq:end}
 \end{align}
 Observe that (\ref{eq:Nash}) follows from the Nash inequality
 and (\ref{eq:utilf}) from the definition of the utility functions.
 In (\ref{eq:newUtilf}), we change how the strategy change from $s_i$ to $opt_i$
 affects the utility of the players.
 We assume that the utilities in $s$ are defined as usual.
 However, a strategy change by player $i$ does not change the utilities of the other players
 (even if they would profit from it).
 In addition, if the remaining capacity of a resource $r$ is less than the demand of $i$ in $opt_i$, 
 player $i$ receives only the remaining capacity
 (that is $b_r-\sum_{i' \ne i} u_{i',r}(s)$).
 Note that within these modified rules, 
 any strategy change by a player $i$ yields at most as much utility as it would in the regular setting.
 In (\ref{eq:splitResources}), we partition $\mathcal{R}$ into $\mathcal{R}_1$ and $\mathcal{R}_2$,
 where $\mathcal{R}_1$ contains all resources for which at least one player evaluates the $\min$ statement to the second expression. 
 Finally (\ref{eq:end}), bringing $- \alpha^{-1} \cdot u(s)$ to the left side of the inequality
 and multiplying both sides with $\alpha$ yields the upper bound of $\alpha + 1$. 
 
 \begin{figure}[ht]
 \centering
 \includegraphics[scale=0.7]{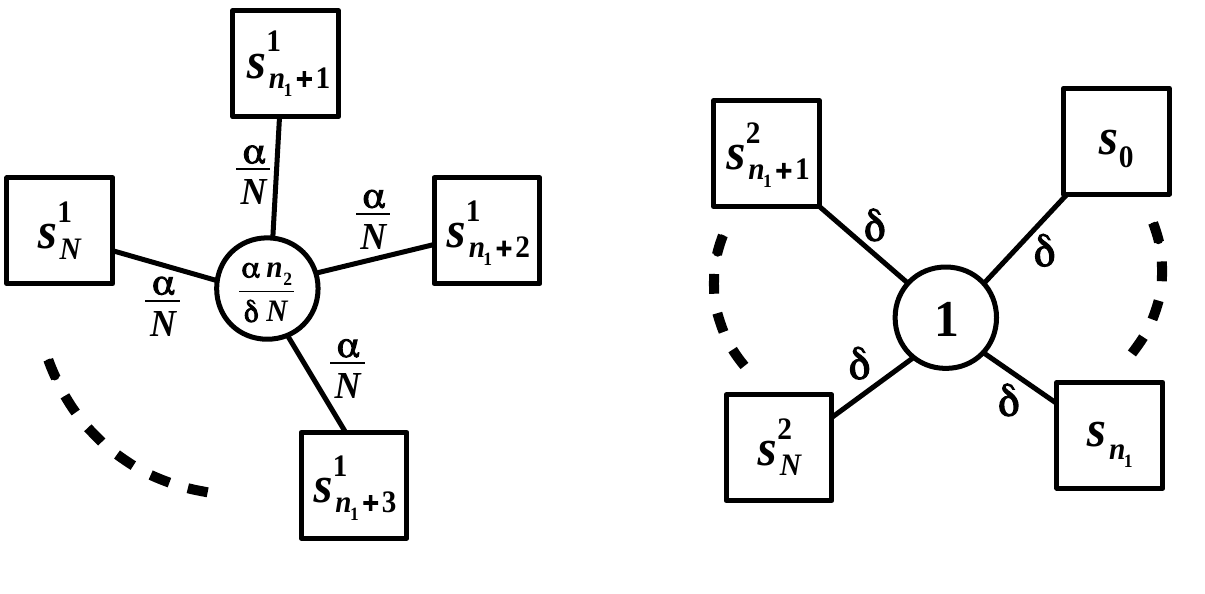}
 \caption{A $\delta$-share BAG with an $\alpha$-approximate price of anarchy close to $\alpha+1$.}
 \label{fig:lowerBoundPoA}
 \end{figure} 
 
 A lower bound of $\alpha$ is trivial for any kind of game and any value of $\alpha$.
 To see that we can come arbitrarily close to $\alpha + 1$, consider the BAG $\mathcal{B}_1$ defined as follows.
 Let both $\delta$ and $\alpha$ be arbitrary, but fixed. 
 Choose $n_1 \in \mathbb{N}$ such that $\delta n_1 \geq 1$ and $n_2 \in \mathbb{N}$.
 $N := n_1+n_2$.
 Define $\mathcal{B}_1$ with 
 $|\mathcal{N}_1| = N$, $\mathcal{R}_1 = \{r_1,r_2\}$ with $b_{r_1} = \frac{\alpha n_2}{\delta N}$ and $b_{r_2} = 1$
 and the strategy spaces $\mathcal{S}_i = \{s_i = (0,\delta)\}$ for $i \in \{1,\ldots,n_1\}$
 and $\mathcal{S}_i = \{s_i^1 = (\frac{\alpha}{N},0), s_i^2 = (0,\delta)\}$ for $i \in \{n_1+1,\ldots,N\}$.
 The resulting game is a $\delta$-share BAG and shown in Figure \ref{fig:lowerBoundPoA}. 
 The players $1,\ldots,n_1$ have only one strategy to choose from.
 Consider the strategy profile ${\sf s} = (s_1,\ldots,s_{n_1},s_{n_1+1}^2,\ldots,s_N^2)$.
 The utility of the players $i = n_1+1,\ldots,N$ is $\frac{1}{N}$ each. 
 Strategy $s_i^1$ yields a fixed utility of $\frac{\alpha}{N}$,
 so while ${\sf s}$ is an $\alpha$-approximate Nash equilibrium with $u({\sf s}) = 1$,
 $opt = (s_1,\ldots,s_{n_1},s_{n_1+1}^1,\ldots,s_N^1)$ has a social welfare of $1 + n_2 \cdot \frac{\alpha}{N}$.
 For $n_2$ large enough, this comes close to $\alpha + 1$.
\end{proof}
Note that Theorem \ref{theo:PoA} even holds for $\alpha <$ \LB, provided the BAG has an $\alpha$-NE.
This result matches our previous work \cite{Drees14}, where we have shown that the price of anarchy for pure Nash equilibria ($\alpha = 1$) is 2. 
\section{Approximating the Optimal Social Welfare}
In this final section, we look at how fast certain utility-maximization games converge towards socially \textit{good} states,
i.e. strategy profiles with a social welfare close to $u(opt)$
if the players keep performing $\alpha$-moves.
We then apply this result to bandwidth allocation games.
For this, we use the concept of nice games introduced in \cite{Anshelevich13}.
A utility-maximization game is \emph{$(\lambda,\mu)$-nice} if for every strategy profile ${\sf s}$,
there is a strategy profile ${\sf s}'$ with
$\sum_{i \in \mathcal{N}} u_i({\sf s}_{-i},s'_i) \geq \lambda \cdot u(opt) - \mu \cdot u({\sf s})$ for constants $\lambda,\mu$.
%
\begin{theorem}
\label{theo:convergence}
 Let $\mathcal{B}$ be a utility-maximization game with a potential function $\phi({\sf s})$
 such that for some $A,B,C \geq 1$, we have that $A \cdot \phi({\sf s}) \geq u({\sf s}) \geq \frac{1}{B} \cdot \phi({\sf s})$,
 $\phi({\sf s}_{-i},s^b_i) - \phi({\sf s}) \geq u_i({\sf s}_{-i},s^b_i) - C \cdot u_i({\sf s})$
 and which is $(\lambda,\mu)$-nice.
 Let $\rho = \frac{\lambda}{C+\mu}$.
 Then, for any $\varepsilon > 0$ and any initial strategy profile ${\sf s}^0$, the best-response dynamic reaches a state ${\sf s}^t$ with 
 $u({\sf s}^t) \geq \frac{\rho(1-\varepsilon)}{AB} u(opt)$
 in at most $\mathcal{O}\left( \frac{n}{A(C+\mu)} \log \frac{1}{\varepsilon} \right)$ steps.
 All future states reached via best-response dynamics will satisfy this approximation factor as well.
\end{theorem}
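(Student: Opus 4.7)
The plan is to derive a geometric recurrence for $\phi({\sf s}^t)$ whose fixed point equals $\rho \cdot u(opt)/A$, and then convert the potential bound into the desired utility bound using $u \geq \phi/B$.

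First, I would combine the three assumptions into a per-step potential growth guarantee. Summing the second assumption over all players, and using $(\lambda,\mu)$-niceness together with the fact that each $s_i^b$ is at least as good as the strategy $s_i'$ witnessing niceness, gives
\begin{equation*}
\sum_{i \in \mathcal{N}} \bigl[\phi({\sf s}_{-i}, s_i^b) - \phi({\sf s})\bigr] \;\geq\; \sum_{i \in \mathcal{N}} u_i({\sf s}_{-i}, s_i^b) \;-\; C \cdot u({\sf s}) \;\geq\; \lambda \cdot u(opt) \;-\; (C+\mu) \cdot u({\sf s}).
\end{equation*}
By averaging, at least one player (the one attaining the maximum $\phi$-improvement) lifts $\phi$ by at least $\frac{1}{n}\bigl[\lambda u(opt) - (C+\mu) u({\sf s})\bigr]$, and best-response dynamics may select precisely this player at each step. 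Substituting $u({\sf s}^t) \leq A \cdot \phi({\sf s}^t)$ yields the affine recurrence
\begin{equation*}
\phi({\sf s}^{t+1}) \;\geq\; \phi({\sf s}^t)\left(1 - \tfrac{A(C+\mu)}{n}\right) + \tfrac{\lambda \cdot u(opt)}{n}.
\end{equation*}

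Next I would analyse this recurrence. Its unique fixed point is $\phi^\ast = \frac{\lambda \cdot u(opt)}{A(C+\mu)} = \frac{\rho \cdot u(opt)}{A}$, and setting $e_t := \phi^\ast - \phi({\sf s}^t)$ (and writing $q := A(C+\mu)/n$) gives the contraction $e_{t+1} \leq e_t \cdot (1-q)$. Assuming $q \in (0,1]$, together with $e_0 \leq \phi^\ast$ since $\phi({\sf s}^0) \geq 0$, iterating and applying $1-x \leq e^{-x}$ yields $e_t \leq \phi^\ast \cdot e^{-qt}$; requiring $e_t \leq \varepsilon\phi^\ast$ gives $t = \mathcal{O}\bigl(\tfrac{n}{A(C+\mu)} \log \tfrac{1}{\varepsilon}\bigr)$. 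The target utility bound then follows from $u({\sf s}^t) \geq \phi({\sf s}^t)/B \geq (1-\varepsilon)\phi^\ast/B = \tfrac{\rho(1-\varepsilon)}{AB} \cdot u(opt)$.

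Finally, for the persistence claim I would observe that the same affine recurrence is self-maintaining: substituting the hypothesis $\phi({\sf s}^t) \geq (1-\varepsilon)\phi^\ast$ into the recurrence gives
\begin{equation*}
\phi({\sf s}^{t+1}) \;\geq\; (1-\varepsilon)\phi^\ast (1-q) + q\phi^\ast \;=\; \phi^\ast\bigl(1 - \varepsilon(1-q)\bigr) \;\geq\; (1-\varepsilon)\phi^\ast,
\end{equation*}
so the bound is inherited by every subsequent profile, and so is $u({\sf s}^t) \geq \phi({\sf s}^t)/B$. The main obstacle I anticipate is justifying the per-step potential bound rigorously: the averaging argument only guarantees that \emph{some} player achieves the required $\phi$-gain, so one either needs the best-response rule to pick the maximally improving player at each step, or a round-based analysis exploiting monotonicity of $\phi$ under the second assumption.
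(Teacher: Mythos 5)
Your proposal is correct and follows essentially the same route as the paper: the paper also sums the per-player guarantee, picks the player maximizing $u_i({\sf s}_{-i},s_i^b) - C\cdot u_i({\sf s})$, and derives a geometric decay for the quantity $f({\sf s}) = \frac{1}{n}\left(\lambda\, u(opt) - A(C+\mu)\phi({\sf s})\right)$, which is exactly your error term $e_t = \phi^\ast - \phi({\sf s}^t)$ up to the constant factor $\frac{A(C+\mu)}{n}$. Your fixed-point reformulation and the self-maintaining-recurrence argument for persistence are only cosmetic variations (the paper instead invokes monotonicity of $\phi$ for persistence), and the caveat you flag about selecting the maximally improving player is handled identically in the paper by fixing that player order.
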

\begin{proof}
We adapt a modified version of a proof from \cite{Anshelevich13}, in which we do not require an exact potential function.
We assume a specific order in which the players perform their strategy changes.
The next player $i$ is chosen such that she maximizes the term $u_i({\sf s}_{-i},s_i^b) - C \cdot u_i({\sf s})$
under the current strategy profile ${\sf s}$. Then we have
 \begin{align}
  \phi({\sf s}_{-i},s^b_i) - \phi({\sf s}) & \geq u_i({\sf s}_{-i},s^b_i) - C \cdot u_i({\sf s})
			       \geq \frac{1}{n} \left( \sum_{j \in \mathcal{N}} u_j({\sf s}_{-j},s^b_j) - C \cdot u_j({\sf s}) \right) \notag \\
			       & \overset{(1)}{\geq} \frac{1}{n} \left( \lambda \cdot u(opt) - (C + \mu) u({\sf s}) \right) \label{eq:nice} \geq \frac{1}{n} \left( \lambda \cdot u(opt) - A (C + \mu) \phi({\sf s}) \right) =: f({\sf s}) \notag
 \end{align}
 $(1)$ uses the fact that $\mathcal{B}$ is $(\lambda,\mu)$-nice.
 With this definition of $f(s)$, we see that
 \[f({\sf s}) - f({\sf s}_{-i},s^b_i) = \frac{A(C+\mu)}{n} \left( \phi({\sf s}_{-i},s^b_i) - \phi({\sf s}) \right) \geq \frac{A(C+\mu)}{n} f({\sf s})\]
 and therefore $f({\sf s}_{-i},s^b_i) \leq \left( 1 - \frac{A(C+\mu)}{n} \right) f({\sf s})$.
 So if ${\sf s}^0$ is the initial strategy profile, the best response dynamic converges towards a strategy profile ${\sf s}^t$ with
 $$f({\sf s}^t) \leq \left( 1 - \frac{A(C+\mu)}{n} \right)^t f({\sf s}^0)$$
 By setting $t = \left\lceil \frac{n}{A(C+\mu)} \log \frac{1}{\varepsilon} \right\rceil$ and using that
 $\left( 1-\frac{1}{x} \right)^x \leq \frac{1}{e}$, we obtain \\
 $f({\sf s}^t) \leq e^{(\log 1/\varepsilon)^{-1}} \cdot f({\sf s}^0) = \varepsilon \cdot f({\sf s}^0) \leq \varepsilon \cdot \frac{\lambda \cdot u(opt)}{n}$.
 Using these results and the bounds for the potential function, we obtain
 \begin{align}
  u({\sf s}^t) & \geq \frac{1}{B} \phi({\sf s}^t) = \frac{n}{AB(C+\mu)} \cdot \left( \frac{\lambda \cdot u(opt)}{n} - f({\sf s}^t) \right) \notag \\
	 & \geq \frac{n}{AB(C+\mu)} \left( (1-\varepsilon) \frac{\lambda \cdot u(opt)}{n} \right) \geq \frac{\rho(1-\varepsilon)}{AB} u(opt) \notag
 \end{align}
 We also see that $\phi({\sf s}^t) \geq \frac{\rho(1-\varepsilon)}{A} u(opt)$.
 Since $\phi$ grows with every strategy change, 
 this bound also holds for all following strategy profiles.
 
\end{proof}
When adapting this result for bandwidth allocation games, note that the players have to perform $\alpha$-moves when following the best-response dynamic.
Otherwise, we cannot guarantee that $\phi$ is strictly monotone.
\begin{corollary}
 \label{cor:convergenceSpeed}
 Let $\mathcal{B}$ be a $\delta$-share BAG
 and $\alpha \geq$ \UB. 
 For any $\varepsilon > 0$ and any initial strategy profile ${\sf s}^0$, the best-response dynamic using only $\alpha$-moves reaches a state ${\sf s}^t$ with 
 $u({\sf s}^t) \geq \frac{1-\varepsilon}{(\alpha^2+1)(\ln(n\delta)+1)} u(opt)$
 in at most $\mathcal{O}\left( \frac{n}{\alpha+\alpha^{-1}} \log \frac{1}{\varepsilon} \right)$ steps.
 All future states reached via best-response dynamics will satisfy this approximation factor as well.
\end{corollary}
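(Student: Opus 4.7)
My plan is to apply Theorem \ref{theo:convergence} to the $\delta$-share BAG by identifying suitable constants $A,B,C,\lambda,\mu$. First, $A=1$ because $\phi({\sf s}) \geq u({\sf s})$ holds resource-by-resource: $\phi_r({\sf s}) = T_r({\sf s}) = u_r({\sf s})$ when $T_r({\sf s}) \leq b_r$, and otherwise $\phi_r({\sf s}) \geq b_r = u_r({\sf s})$. Second, $B = 1 + \ln(n\delta)$ comes directly from Lemma \ref{lemma:socialWelfareLowerBound}. Third, $C = \alpha$ follows from the potential analysis in the proof of Theorem \ref{theo:lowerBoundExistance}: for $\alpha \geq \alpha_\delta^u$ we have $\phi_{i,r}({\sf s}) \leq \alpha\, u_{i,r}({\sf s})$ for every $i,r$, which combined with $u_{i,r}({\sf s}_{-i},s_i^b) \leq \phi_{i,r}({\sf s}_{-i},s_i^b)$ yields $\phi({\sf s}_{-i},s_i^b) - \phi({\sf s}) \geq u_i({\sf s}_{-i},s_i^b) - \alpha\, u_i({\sf s})$.

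The delicate step is establishing $(\lambda,\mu) = (\alpha^{-1},\alpha^{-1})$-niceness. Removing the leading $\alpha^{-1}$ from the chain (\ref{eq:utilf})--(\ref{eq:end}) in the proof of Theorem \ref{theo:PoA} produces the unconditional inequality $\sum_i u_i({\sf s}_{-i},opt_i) \geq u(opt) - u({\sf s})$, which is ordinary $(1,1)$-niceness with witness $opt$. However, the corollary restricts moves to $\alpha$-moves (this is required for $\phi$ to be monotone along the dynamic, as noted just before the corollary), so we must apply the theorem's argument with the best $\alpha$-move $s_i^\alpha$ in place of $s_i^b$ (or $s_i$ itself when $i$ has no $\alpha$-move). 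If $i$ has no $\alpha$-move, then $u_i({\sf s}_{-i},opt_i) \leq u_i({\sf s}_{-i},s_i^b) \leq \alpha\, u_i({\sf s}) = \alpha\, u_i({\sf s}_{-i},s_i^\alpha)$; otherwise $u_i({\sf s}_{-i},s_i^\alpha) \geq u_i({\sf s}_{-i},opt_i)$. In both cases $u_i({\sf s}_{-i},s_i^\alpha) \geq \alpha^{-1}\, u_i({\sf s}_{-i},opt_i)$, so summing gives $\sum_i u_i({\sf s}_{-i},s_i^\alpha) \geq \alpha^{-1}(u(opt) - u({\sf s}))$, i.e.\ the effective niceness constants $(\alpha^{-1},\alpha^{-1})$ for the $\alpha$-move dynamic.

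Plugging $A = 1$, $B = 1 + \ln(n\delta)$, $C = \alpha$, and $(\lambda,\mu) = (\alpha^{-1},\alpha^{-1})$ into Theorem \ref{theo:convergence} gives $\rho = \alpha^{-1}/(\alpha + \alpha^{-1}) = 1/(\alpha^2+1)$, hence an approximation ratio $\rho(1-\varepsilon)/(AB) = (1-\varepsilon)/((\alpha^2+1)(\ln(n\delta)+1))$, and convergence after $\mathcal{O}(n/(A(C+\mu))\log(1/\varepsilon)) = \mathcal{O}(n/(\alpha + \alpha^{-1})\log(1/\varepsilon))$ $\alpha$-moves. Monotonicity of $\phi$ under $\alpha$-moves guarantees that the bound persists for all future states, just as in the original theorem. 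The main obstacle is the niceness step: the PoA-style argument directly gives $(1,1)$-niceness with $opt$, but the restriction to $\alpha$-moves introduces an additional factor of $\alpha^{-1}$ in the effective bound, and it is precisely this extra factor that turns the $\alpha+1$ one would get from Theorem \ref{theo:convergence} under true best response into the $\alpha^2+1$ appearing in the corollary.
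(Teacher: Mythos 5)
Your proof is correct and follows essentially the same route as the paper: identify $A=1$, $B=1+\ln(n\delta)$, $C=\alpha$ and niceness constants $(\alpha^{-1},\alpha^{-1})$, then invoke Theorem~\ref{theo:convergence} to get $\rho=\frac{1}{\alpha^2+1}$. The only divergence is in the niceness step: the paper obtains the factor $\alpha^{-1}$ by weakening $u_i({\sf s}_{-i},s_i^b)\geq u_i({\sf s}_{-i},opt_i)$ and copying the chain from Theorem~\ref{theo:PoA} verbatim, whereas you derive it from the restriction to $\alpha$-moves via the best $\alpha$-move $s_i^\alpha$; your accounting is the more explicit of the two, though note that in the regime where a move is actually needed the maximizing player's best response is automatically an $\alpha$-move, so the $(1,1)$-niceness you observe would in fact support the stronger constant $\alpha+1$ in place of $\alpha^2+1$ --- both arguments nevertheless establish the bound as stated.
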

\begin{proof}
 First we show that any $\delta$-share BAG is $(\alpha^{-1},\alpha^{-1})$-nice.
 Let ${\sf s}$ be an arbitrary strategy profile.
 We show that $\sum_{i \in \mathcal{N}} u_i({\sf s}_{-i},s^b_i) \geq \alpha^{-1} \cdot u(opt) - \alpha^{-1} \cdot u({\sf s})$.
 Note that $u_i({\sf s}_{-i},s^b_i) \geq u_i({\sf s}_{-i},opt_i)$ by definition of $s^b_i$.
 This implies $u_i({\sf s}_{-i},s^b_i) \geq \alpha^{-1} \cdot u_i({\sf s}_{-i},opt_i)$ and
 we can therefore copy the proof of Theorem \ref{theo:PoA} to show that $\sum_{i \in \mathcal{N}} u_i({\sf s}_{-i},s^b_i) \geq \alpha^{-1} \cdot u(opt) - \alpha^{-1} \cdot u({\sf s})$.
 
 We now use our potential function $\phi({\sf s})$,
 for which we already know that $\phi({\sf s}) \geq u({\sf s}) \geq \frac{1}{1+\ln(n\delta)} \phi({\sf s})$ (see Lemma \ref{lemma:socialWelfareLowerBound}) and
 $\phi({\sf s}_{-i},s^b_i) - \phi({\sf s}) \geq u_i({\sf s}_{-i},s^b_i) - \alpha \cdot u_i({\sf s})$.
 So we obtain $A = 1$, $B = 1+\ln(n\delta)$ and $C = \alpha$.
 Using these values together with Theorem \ref{theo:convergence} directly leads to our result.
 
\end{proof}
This last result is particular interesting, considering that computing 
the strategy profile $opt$ is {\sf NP}-hard.
Prior to this, an approximation algorithm was only known for games in which the strategy spaces consist of the bases of a matroid over the resources \cite{Drees14}.
Following the best-response dynamic, we can now approximate the optimal solution for arbitrary strategy space structures.
While reaching an actual $\alpha$-approximate NE by this method may take exponentially long, 
we obtain an $\mathcal{O}(\alpha^2 \log(n))$-approximation of the worst-case equilibrium after a linear number of strategy changes.
\bibliographystyle{plain}
\bibliography{references}

\begin{thebibliography}{10}

\bibitem{DBLP:journals/tcs/AckermannRV09}
Heiner Ackermann, Heiko R{\"{o}}glin, and Berthold V{\"{o}}cking.
\newblock {Pure Nash equilibria in player-specific and weighted congestion
  games}.
\newblock {\em Theor. Comput. Sci.}, 410(17):1552--1563, 2009.

\bibitem{ackermann2008complexity}
Heiner Ackermann and Alexander Skopalik.
\newblock Complexity of pure nash equilibria in player-specific network
  congestion games.
\newblock {\em Internet Mathematics}, 5(4):323--342, 2008.

\bibitem{DBLP:journals/siamcomp/AlandDGMS11}
Sebastian Aland, Dominic Dumrauf, Martin Gairing, Burkhard Monien, and Florian
  Schoppmann.
\newblock {Exact Price of Anarchy for Polynomial Congestion Games}.
\newblock {\em {SIAM} J. Comput.}, 40(5):1211--1233, 2011.

\bibitem{Anshelevich13}
Elliot Anshelevich, John Postl, and Tom Wexler.
\newblock {Assignment Games with Conflicts: Price of Total Anarchy and
  Convergence Results via Semi-Smoothness}.
\newblock {\em CoRR}, abs/1304.5149, 2013.

\bibitem{DBLP:conf/ijcai/AugustineCEFGS11}
John Augustine, Ning Chen, Edith Elkind, Angelo Fanelli, Nick Gravin, and
  Dmitry Shiryaev.
\newblock {Dynamics of Profit-Sharing Games}.
\newblock In Toby Walsh, editor, {\em Proc. {IJCAI}}, pages 37--42.
  {IJCAI/AAAI}, 2011.

\bibitem{DBLP:journals/siamcomp/AwerbuchAE13}
Baruch Awerbuch, Yossi Azar, and Amir Epstein.
\newblock {The Price of Routing Unsplittable Flow}.
\newblock {\em {SIAM} J. Comput.}, 42(1):160--177, 2013.

\bibitem{DBLP:conf/sigecom/AwerbuchAEMS08}
Baruch Awerbuch, Yossi Azar, Amir Epstein, Vahab~S. Mirrokni, and Alexander
  Skopalik.
\newblock {Fast convergence to nearly optimal solutions in potential games}.
\newblock In Lance Fortnow, John Riedl, and Tuomas Sandholm, editors, {\em
  Proc. EC}, pages 264--273. {ACM}, 2008.

\bibitem{DBLP:journals/teco/BhawalkarGR14}
Kshipra Bhawalkar, Martin Gairing, and Tim Roughgarden.
\newblock {Weighted Congestion Games: The Price of Anarchy, Universal
  Worst-Case Examples, and Tightness}.
\newblock {\em {ACM} {TEAC}}, 2(4):14, 2014.

\bibitem{Brinkmann:2014}
Andr{\'e} Brinkmann, Peter Kling, Friedhelm Meyer auf~der Heide, Lars Nagel,
  S\"{o}ren Riechers, and Tim S\"{u}\ss.
\newblock Scheduling shared continuous resources on many-cores.
\newblock In {\em Proc. 26th ACM SPAA}, pages 128--137, New York, NY, USA,
  2014. ACM.

\bibitem{DBLP:conf/focs/CaragiannisFGS11}
Ioannis Caragiannis, Angelo Fanelli, Nick Gravin, and Alexander Skopalik.
\newblock {Efficient Computation of Approximate Pure Nash Equilibria in
  Congestion Games}.
\newblock In Rafail Ostrovsky, editor, {\em 52nd {FOCS}}, pages 532--541.
  {IEEE} Computer Society, 2011.

\bibitem{caragiannis2012approximate}
Ioannis Caragiannis, Angelo Fanelli, Nick Gravin, and Alexander Skopalik.
\newblock Approximate pure nash equilibria in weighted congestion games:
  existence, efficient computation, and structure.
\newblock In {\em Proc. 13th EC}, pages 284--301. ACM, 2012.

\bibitem{DBLP:journals/mst/ChenR09}
Ho{-}Lin Chen and Tim Roughgarden.
\newblock {Network Design with Weighted Players}.
\newblock {\em Theory Comput. Syst.}, 45(2):302--324, 2009.

\bibitem{DBLP:journals/geb/ChienS11}
Steve Chien and Alistair Sinclair.
\newblock {Convergence to approximate Nash equilibria in congestion games}.
\newblock {\em Games and Economic Behavior}, 71(2):315--327, 2011.

\bibitem{DBLP:conf/stoc/ChristodoulouK05}
George Christodoulou and Elias Koutsoupias.
\newblock {The price of anarchy of finite congestion games}.
\newblock In Harold~N. Gabow and Ronald Fagin, editors, {\em Proc. 37th STOC},
  pages 67--73. {ACM}, 2005.

\bibitem{DBLP:journals/algorithmica/ChristodoulouKS11}
George Christodoulou, Elias Koutsoupias, and Paul~G. Spirakis.
\newblock {On the Performance of Approximate Equilibria in Congestion Games}.
\newblock {\em Algorithmica}, 61(1):116--140, 2011.

\bibitem{mooresLawForDataTraffic02}
K.G. Coffman and A.M. Odlyzko.
\newblock Internet growth: Is there a “moore’s law” for data traffic?
\newblock In James Abello, PanosM. Pardalos, and MauricioG.C. Resende, editors,
  {\em Handbook of Massive Data Sets}, volume~4 of {\em Massive Computing},
  pages 47--93. Springer, 2002.

\bibitem{Drees14}
Maximilian Drees, S{\"{o}}ren Riechers, and Alexander Skopalik.
\newblock {Budget-Restricted Utility Games with Ordered Strategic Decisions}.
\newblock In {\em Proc. 7th SAGT}, pages 110--121, 2014.

\bibitem{dunkel2008complexity}
Juliane Dunkel and Andreas~S Schulz.
\newblock On the complexity of pure-strategy nash equilibria in congestion and
  local-effect games.
\newblock {\em Mathematics of Operations Research}, 33(4):851--868, 2008.

\bibitem{DBLP:conf/stoc/FabrikantPT04}
Alex Fabrikant, Christos~H. Papadimitriou, and Kunal Talwar.
\newblock {The complexity of pure Nash equilibria}.
\newblock In L{\'{a}}szl{\'{o}} Babai, editor, {\em Proc. 36th STOC}, pages
  604--612. {ACM}, 2004.

\bibitem{Skopalik14}
Matthias Feldotto, Martin Gairing, and Alexander Skopalik.
\newblock {Bounding the Potential Function in Congestion Games and Approximate
  Pure Nash Equilibria}.
\newblock In {\em Proc. 10th WINE}, pages 30--43, 2014.

\bibitem{goemans2006market}
Michel~X Goemans, Li~Li, Vahab~S Mirrokni, and Marina Thottan.
\newblock Market sharing games applied to content distribution in ad hoc
  networks.
\newblock {\em IEEE Journal on Selected Areas in Communications},
  24(5):1020--1033, 2006.

\bibitem{DBLP:conf/approx/HansknechtKS14}
Christoph Hansknecht, Max Klimm, and Alexander Skopalik.
\newblock {Approximate Pure Nash Equilibria in Weighted Congestion Games}.
\newblock In {\em {APPROX/RANDOM}}, pages 242--257, 2014.

\bibitem{harks2010existence}
Tobias Harks and Max Klimm.
\newblock On the existence of pure nash equilibria in weighted congestion
  games.
\newblock In {\em Automata, Languages and Programming}, pages 79--89. Springer,
  2010.

\bibitem{DBLP:conf/tark/HarksK11}
Tobias Harks and Max Klimm.
\newblock {Congestion games with variable demands}.
\newblock In Krzysztof~R. Apt, editor, {\em Proc. 13th TARK}, pages 111--120.
  {ACM}, 2011.

\bibitem{DBLP:conf/sigecom/LucierL11}
Brendan Lucier and Renato~Paes Leme.
\newblock {{GSP} auctions with correlated types}.
\newblock In Yoav Shoham, Yan Chen, and Tim Roughgarden, editors, {\em Proc.
  12th EC}, pages 71--80. {ACM}, 2011.

\bibitem{DBLP:conf/mfcs/MavronicolasMMT07}
Marios Mavronicolas, Igal Milchtaich, Burkhard Monien, and Karsten Tiemann.
\newblock {Congestion Games with Player-Specific Constants}.
\newblock In {\em Proc. 32nd MFCS}, pages 633--644, 2007.

\bibitem{Milchtaich1996111}
Igal Milchtaich.
\newblock {Congestion Games with Player-Specific Payoff Functions}.
\newblock {\em Games and Economic Behavior}, 13(1):111 -- 124, 1996.

\bibitem{monderer1996potential}
Dov Monderer and Lloyd~S Shapley.
\newblock Potential games.
\newblock {\em Games and Economic behavior}, 14(1):124--143, 1996.

\bibitem{Rosenthal73}
Robert~W. Rosenthal.
\newblock {A class of games possessing pure-strategy Nash equilibria}.
\newblock {\em International Journal of Game Theory}, 2(1):65--67, 1973.

\bibitem{DBLP:conf/stoc/Roughgarden09}
Tim Roughgarden.
\newblock {Intrinsic robustness of the price of anarchy}.
\newblock In Michael Mitzenmacher, editor, {\em Proc. 41st STOC}, pages
  513--522. {ACM}, 2009.

\bibitem{DBLP:conf/stoc/SkopalikV08}
Alexander Skopalik and Berthold V{\"{o}}cking.
\newblock {Inapproximability of pure nash equilibria}.
\newblock In Cynthia Dwork, editor, {\em Proc. 40th STOC}, pages 355--364.
  {ACM}, 2008.

\end{thebibliography}

\end{document}